\newtheorem{thm}{Theorem}[section]
\newtheorem{cor}[thm]{Corollary}
\newtheorem{remark}[thm]{Remark}
\newtheorem{lemma}[thm]{Lemma}
\newtheorem{prop}[thm]{Proposition}
\newtheorem{exam}[thm]{Example}
\newtheorem{defn}[thm]{Definition}
\newcommand{\bra}[1]{\langle #1 |}
\newcommand{\ket}[1]{| #1 \rangle}
\newcommand{\braket}[2]{\langle #1 | #2 \rangle}
\newcommand{\ketbra}[2]{| #1 \rangle\langle #2 |}
\newcommand{\Tr}{{\rm Tr}}
\newcommand{\bb}[1]{\mathbb{#1}}
\newcommand{\cl}[1]{\mathcal{#1}}
\begin{document}

\title[A Family Of Norms With Applications In Quantum Information Theory]{A Family Of Norms With Applications In \\ Quantum Information Theory}

\author[N.~Johnston, D.~W.~Kribs]{Nathaniel Johnston$^1$ and David~W.~Kribs$^{1,2}$}
\address{$^1$Department of Mathematics \& Statistics, University of Guelph,
Guelph, ON, Canada N1G 2W1}
\address{$^2$Institute for Quantum Computing, University of Waterloo, Waterloo, ON, Canada
N2L 3G1}

\begin{abstract}
We consider a family of vector and operator norms defined by the Schmidt decomposition theorem for quantum states. We use these norms to tackle two fundamental problems in quantum information theory: the classification problem for $k$-positive linear maps and entanglement witnesses, and the existence problem for non-positive partial transpose bound entangled states. We begin with an analysis of the norms, showing that the vector norms can be explicitly calculated, and we derive several inequalities in order to bound the operator norms and compute them in special cases. We then use the norms to establish what appears to be the most general spectral test for $k$-positivity currently available, showing how it implies several other known tests as well as some new ones. Building on this work, we frame the NPPT bound entangled problem as a concrete problem on a specific limit, specifically that a particular entangled Werner state is bound entangled if and only if a certain norm inequality holds on a given family of projections.
\end{abstract}

\maketitle

\section{Introduction}

Entanglement theory lies at the heart of investigations in quantum information. One of the most basic tools in this theory is the Schmidt decomposition theorem for quantum states \cite{NC00}.  In this paper we consider a family of vector and operator norms defined by the Schmidt theorem. We conduct the first in-depth analysis of these norms and we use them to tackle two central problems in quantum information theory: the classification problem for $k$-positive linear maps and entanglement witnesses, and the existence problem for non-positive partial transpose (NPPT) bound entangled states.


The family of norms generalize the standard Euclidean and operator norms and can be regarded as the local analogues of these norms. The vector norms have recently appeared in \cite{CKo09,CKS09} as a tool for testing $k$-positivity of linear maps. The operator and vector $2$-norms have appeared in literature related to NPPT bound entangled states \cite{PPHH07,DSSTT00}. We begin the paper with a systematic study of the norms. After deriving their basic properties, we focus on bounding them in a variety of ways and computing them in special cases. We also show how these norms are related to quantum fidelity and trace distance, and regularized relative entropy of entanglement \cite{VP98,VPRN97}.



Once we have developed a number of tools to handle these norms, we establish what appears to be the most general spectral test for $k$-positivity, and hence $k$-entanglement witnesses, currently available. We derive this test as an abstract machine, and then apply it to concrete situations. We use it to reproduce the recently-developed tests of Chru\'{s}ci\'{n}ski and Kossakowski \cite{CKo09}, which in turn imply the tests of Takesaki and Tomiyama \cite{TT83} and Benatti, Floreanini, and Piani \cite{BFP04}. We also show it implies the test of Kuah and Sudarshan \cite{KSu05}. And we find a number of new tests for $k$-positivity, including a complete characterization in the case of two distinct eigenvalues.

We then explore a connection between $k$-positivity of linear maps and the existence of NPPT bound entangled states. Separable states are bound entangled, as are states with positive partial transpose \cite{P96,H97,HHH98}. However, it is unknown whether or not there exist NPPT bound entangled states, and the existence of such states would exhibit a fundamentally new type of entanglement \cite{DSSTT00}. We apply our analysis of the operator norms on projections to the crucial case of Werner states \cite{W89}, and we frame the NPPT bound entangled problem as a concrete calculus problem.



The paper is arranged as follows. In Section~\ref{sec:prelim} we present our notation and terminology and introduce the reader to the required notions from operator theory and quantum information. In Section~\ref{sec:vectorNorms} we will define and explore vector norms, which can be thought of as measuring how close pure states are to having a given Schmidt rank. We will then define the operator norms in Section~\ref{sec:MatrixNorms}, which apply to arbitrary mixed states and recover the vector norms in the case of pure states. We will see that the operator norms are very difficult to calculate in general, so we will develop several inequalities to bound them in various situations.

Section~\ref{sec:SpectralInequalities} will focus on the problem of determining whether or not a given operator is $k$-block positive -- in the language of quantum information this is the problem of determining whether or not that operator is a $k$-entanglement witness. We show how the $k$th  operator norms can be used to derive several testable conditions for $k$-positivity, and we derive a complete characterization in the case when the operator has two distinct eigenvalues. In Section~\ref{sec:WernerStates} we will apply our $k$-block positivity tests to Werner states and show that a particular NPPT Werner state is bound entangled if and only if a certain limit involving the operator norms is satisfied.

\section{Preliminaries}\label{sec:prelim}

We will use $\mathcal{H}$ to denote a finite-dimensional complex Hilbert space and $\mathcal{L}(\mathcal{H})$ to denote the set of linear operators on $\mathcal{H}$. When the dimension of the Hilbert space is important, we will denote it $\cl{H}_n$, where $n$ is its dimension. Similarly, $id_n$ will represent the identity map on $\cl{L}(\cl{H}_n)$. Of particular interest in quantum information is the case when $\cl{H}$ is a \emph{bipartite system} -- a tensor product of two smaller Hilbert spaces $\cl{H} = \cl{H}_n \otimes \cl{H}_m$. We will assume for the sake of brevity throughout the paper that $m \leq n$. A vector $\ket{v} \in \cl{H}$ is denoted using Dirac bra-ket notation, with $\bra{v} := \ket{v}^*$. Whenever we use this bra-ket notation, it will be assumed that $\ket{v}$ is such that $\big\|\ket{v}\big\| = 1$ and so $\ket{v}$ represents a pure state (or more correctly the associated state is given by the rank one projection $\ketbra{v}{v}$). We will denote the computational basis vectors (i.e., the vectors with $1$ in the $i^{th}$ component and $0$ in all other components) by $\{\ket{e_i}\}$.

If $X \in \cl{L}(\cl{H})$ is positive then we will write $X \geq 0$ or $X \in \cl{L}(\cl{H})^{+}$. A (mixed) quantum state is represented by a \emph{density operator} $\rho \geq 0$ that satisfies $\Tr(\rho) = 1$. Whenever lowercase Greek letters like $\rho$ or $\sigma$ are used, it is assumed that they are density operators. General operators will be represented by uppercase letters like $X$ and $Y$.

Given a linear map $\Phi : \mathcal{L}(\cl{H}_n) \rightarrow \cl{L}(\cl{H}_m)$, we can define its dual map $\Phi^\dagger : \mathcal{L}(\mathcal{H}_m) \rightarrow \mathcal{L}(\mathcal{H}_n)$ via the Hilbert-Schmidt inner product $\Tr(\Phi(X)Y) = \Tr(X \Phi^\dagger(Y))$. The map $\Phi$ is said to be:
\begin{itemize}
    \item \emph{Hermicity-preserving} if $\Phi(X)^* = \Phi(X)$ whenever $X^* = X$.
    \item \emph{Positive} if $\Phi(X) \geq 0$ whenever $X \geq 0$.
    \item \emph{$k$-positive} if $(id_k \otimes \Phi)(X) \geq 0$ whenever $X \in (\cl{L}(\cl{H}_k) \otimes \cl{L}(\cl{H}_n))^+$.
    \item \emph{Completely positive} if $\Phi$ is $k$-positive for all $k \in \bb{N}$.
\end{itemize}

\noindent A theorem of Choi says that $n$-positivity of $\Phi$ is equivalent to complete positivity of $\Phi$ \cite{C75,Paulsentext}. Furthermore, $\Phi$ is completely positive if and only if $(id_n \otimes \Phi)(E) \geq 0$, where $E := \frac{1}{n}\sum_{i,j=1}^n\ketbra{e_i}{e_j} \otimes \ketbra{e_i}{e_j}$. The matrix form for the operator $(id_n \otimes \Phi)(E)$ is referred to as the \emph{Choi matrix} of $\Phi$. In fact, the Choi matrix defines an isomorphism (known as the \emph{Choi-Jamiolkowski isomorphism} \cite{J72}) between linear maps $\Phi : \mathcal{L}(\cl{H}_n) \rightarrow \cl{L}(\cl{H}_m)$ and operators $X \in \cl{L}(\cl{H}_n) \otimes \cl{L}(\cl{H}_m)$. Under this isomorphism, the Hermicity-preserving maps $\Phi$ correspond to the Hermitian operators $X$.  In keeping with the terminology of \cite{S08,SSZ09}, we will say that a Hermitian operator $X = X^* \in \cl{L}(\cl{H}_n) \otimes \cl{L}(\cl{H}_m)$ is \emph{$k$-block positive} if the associated linear map is $k$-positive.

Several connections will be made between these norms and other more well-known norms. In particular, for an operator $X \in \cl{L}(\cl{H}_n)$ it will be useful to be familiar with the \emph{Ky Fan $k$-norm} \cite{HJ91} of $X$, given by $\big\|X\big\|_k := \sum_{i=1}^k s_i$, where $s_1 \geq \cdots \geq s_n$ are the singular values of $X$. Note that the smallest of the Ky Fan norms, the Ky Fan $1$-norm, is equal to the operator norm. The largest of the Ky Fan norms, the Ky Fan $n$-norm, is equal to the \emph{trace norm} because it can be written as $\big\|X\big\|_{n} = \Tr(|X|)$, where $|X| := \sqrt{X^*X}$ is the absolute value of $X$.

Some related distance measures that are used frequently in quantum information are the {\em trace distance} $\delta$ and the \emph{quantum fidelity} $F$ between two density operators $\rho, \sigma \in \cl{L}(\cl{H})$:
\begin{align*}
  \delta(\rho,\sigma) & := \frac{1}{2}\Tr(|\rho - \sigma|), \\
  F(\rho,\sigma) & := \Big( \Tr\big( \sqrt{ \sqrt{\rho} \sigma \sqrt{\rho} } \big) \Big)^2.
\end{align*}

The trace distance can be thought of as the distance between $\rho$ and $\sigma$, and the quantum fidelity can be interpreted as the amount of overlap between them. They both simplify when their inputs are pure states, and in particular
\begin{align}\label{eq:traceDistPure}
  \delta(\ketbra{v}{v},\ketbra{w}{w}) & = \sqrt{1 - |\braket{v}{w}|^2}, \\ \label{eq:fidelityPure}
  F(\ketbra{v}{v},\sigma) & = \bra{v}\sigma\ket{v}.
\end{align}

\noindent These will be useful tools for providing interpretations of the  vector and operator norms that will be introduced.

\subsection{Schmidt Rank and Schmidt Number}\label{sec:schmidt}

The Schmidt Decomposition Theorem \cite[Section 2.5]{NC00} is a basic tool in quantum information theory. It states that if $\ket{v} \in \cl{H}_n \otimes \cl{H}_m$ then there exists $k \leq m$ (recall that $m \leq n$ by assumption) and orthonormal sets of vectors $\{ \ket{u_1}, \ket{u_2}, \ldots, \ket{u_k} \} \subset \cl{H}_n$ and $\{ \ket{v_1}, \ket{v_2}, \ldots, \ket{v_k} \} \subset \cl{H}_m$ such that
  \begin{align}\label{eq:schmidtdecomp}
    \ket{v} = \sum_{i=1}^k{\alpha_i \ket{u_i} \otimes \ket{v_i}}
  \end{align}

\noindent for some non-negative real constants $\{\alpha_i\}$.

The standard proof of the Schmidt Decomposition works by noticing that there is an isomorphism between $\cl{H}_n \otimes \cl{H}_m$ and $\cl{L}(\cl{H}_n,\cl{H}_m)$ given by associating a vector $\ket{u_i} \otimes \ket{v_i}$ with the operator $\ket{u_i}\overline{\bra{v_i}}$ and extending linearly. We will denote the operator associated with the vector $\ket{v}$ by $A_v$. Applying the singular value decomposition to $A_v$ gives the Schmidt Decomposition of $\ket{v}$.

In the Schmidt Decomposition~\eqref{eq:schmidtdecomp} of $\ket{v}$, the least number of terms required in the summation is known as the \emph{Schmidt rank} of $\ket{v}$, denoted $SR(\ket{v})$. It follows that the Schmidt rank of $\ket{v}$ is equal to the number of non-zero singular values of the operator to which $\ket{v}$ is associated (i.e., its rank). Similarly, the $\alpha_i$'s are exactly the singular values of $A_v$. Because the singular value decomposition is easy to compute, so are the Schmidt rank and the Schmidt Decomposition of an arbitrary pure state $\ket{v}$.

The following useful recent result of Cubitt, Montanaro and Winter \cite{CW08}  provides a tight bound on the dimension of subspaces consisting entirely of vectors with high Schmidt rank.
\begin{thm}\label{thm:CMW08}
  The maximum dimension of a subspace $\cl{S} \subseteq \cl{H}_n \otimes \cl{H}_m$ such that $SR(\ket{v}) \geq k$ for all $\ket{v} \in \cl{S}$ is given by $(n - k + 1)(m - k + 1)$.
\end{thm}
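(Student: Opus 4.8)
The plan is to pass through the operator picture already set up above and then run a dimension count in projective space. Via the isomorphism $\ket{v}\mapsto A_v$ between $\cl{H}_n\otimes\cl{H}_m$ and $\cl{L}(\cl{H}_m,\cl{H}_n)$, a pure state $\ket{v}$ has $SR(\ket{v})$ equal to the rank of the $n\times m$ matrix $A_v$. Hence a subspace $\cl{S}\subseteq\cl{H}_n\otimes\cl{H}_m$ satisfies $SR(\ket{v})\geq k$ for every nonzero $\ket{v}\in\cl{S}$ if and only if the corresponding linear subspace of $\cl{L}(\cl{H}_m,\cl{H}_n)$ contains no nonzero matrix of rank $\leq k-1$, so the theorem becomes: the largest linear subspace of $\cl{L}(\cl{H}_m,\cl{H}_n)\cong\bb{C}^{nm}$ meeting the set of rank-$\leq(k-1)$ matrices only at $0$ has dimension $(n-k+1)(m-k+1)$. (The case $k=1$ is trivial, so assume $k\geq 2$.)

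The key geometric input is the \emph{determinantal variety} $\cl{D}_{k-1}\subseteq\bb{P}\big(\cl{L}(\cl{H}_m,\cl{H}_n)\big)=\bb{P}^{nm-1}$ of classes of matrices of rank $\leq k-1$. I would first record that $\cl{D}_{k-1}$ is an irreducible projective variety of dimension $(k-1)(n+m-k+1)-1$: the multiplication map $\cl{L}(\cl{H}_m,\cl{H}_{k-1})\times\cl{L}(\cl{H}_{k-1},\cl{H}_n)\to\cl{L}(\cl{H}_m,\cl{H}_n)$ has image the rank-$\leq(k-1)$ matrices and generic fibre a $\mathrm{GL}_{k-1}$-orbit, giving affine dimension $(k-1)m+(k-1)n-(k-1)^2=(k-1)(n+m-k+1)$; then projectivize. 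Now combine this with two standard facts from elementary algebraic geometry over $\bb{C}$: for projective varieties $X,Y\subseteq\bb{P}^N$ one has $X\cap Y\neq\varnothing$ whenever $\dim X+\dim Y\geq N$, while a generic linear subspace $L\subseteq\bb{P}^N$ with $\dim L+\dim Y<N$ satisfies $L\cap Y=\varnothing$; and the elementary identity $(k-1)(n+m-k+1)+(n-k+1)(m-k+1)=nm$. For the upper bound, if a subspace $\cl{S}$ had $\dim\cl{S}\geq (n-k+1)(m-k+1)+1$, then $\dim\bb{P}(\cl{S})+\dim\cl{D}_{k-1}\geq (n-k+1)(m-k+1)+(k-1)(n+m-k+1)-1=nm-1$, so $\bb{P}(\cl{S})\cap\cl{D}_{k-1}\neq\varnothing$ and $\cl{S}$ contains a nonzero matrix of rank $\leq k-1$; this rules out rank-$\geq k$ subspaces of dimension exceeding $(n-k+1)(m-k+1)$. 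For the matching lower bound, since $\big((n-k+1)(m-k+1)-1\big)+\dim\cl{D}_{k-1}=nm-2<nm-1$, a generic linear subspace $\cl{S}$ of dimension exactly $(n-k+1)(m-k+1)$ has $\bb{P}(\cl{S})$ disjoint from $\cl{D}_{k-1}$, so every nonzero element of $\cl{S}$ has rank $\geq k$; transporting back through $A_v\mapsto\ket{v}$ yields the required subspace.

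If one prefers an explicit extremal subspace rather than a generic one, a natural candidate is the span of the banded vectors $\sum_{i=0}^{k-1}\ket{a+i}\otimes\ket{b+i}$ for $0\leq a\leq n-k$ and $0\leq b\leq m-k$; under the identification of $\cl{H}_n\otimes\cl{H}_m$ with bivariate polynomials of bidegree $<(n,m)$, a nonzero element of this span corresponds to a polynomial $\Lambda(s,t)\big(1+st+\cdots+(st)^{k-1}\big)$ with $\Lambda\neq 0$, and one must verify that every such polynomial has matrix-rank $\geq k$. I expect this verification to be the main obstacle along the explicit route; along the dimension-count route the only real content is pinning down $\dim\cl{D}_{k-1}$ correctly, after which the two standard intersection statements and the arithmetic identity do the rest. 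For that reason I would carry out the dimension-count argument and present the banded construction only as an optional refinement.
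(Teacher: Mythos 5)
The paper does not actually prove this statement: it is quoted verbatim from Cubitt, Montanaro and Winter \cite{CW08} and used as a black box, so the only meaningful comparison is with the argument in that reference. Your proof is correct. The reduction to linear subspaces of $\cl{L}(\cl{H}_m,\cl{H}_n)$ avoiding the nonzero matrices of rank at most $k-1$ is exactly right (the paper's isomorphism $\ket{u}\otimes\ket{v}\mapsto\ket{u}\overline{\bra{v}}$ is complex linear, so complex subspaces go to complex subspaces and Schmidt rank goes to matrix rank), your dimension count $\dim\cl{D}_{k-1}=(k-1)(n+m-k+1)-1$ is the standard one, the identity $(k-1)(n+m-k+1)+(n-k+1)(m-k+1)=nm$ checks out, and the two intersection-theoretic facts you invoke are valid over $\bb{C}$. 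Your upper bound via the projective dimension theorem applied to the determinantal variety is in fact the same argument as in \cite{CW08}. Where you genuinely diverge is the lower bound: \cite{CW08} exhibits the explicit banded subspace you describe and proves by a combinatorial leading-monomial argument that every nonzero element of it has rank at least $k$ (this is precisely the verification you flag as the main obstacle), whereas you replace this with the observation that a generic linear subspace of projective dimension $(n-k+1)(m-k+1)-1$ misses $\cl{D}_{k-1}$ because the dimensions are subcomplementary. The generic argument is shorter and non-constructive; since the only place the paper uses tightness of the bound (the final claim of Proposition~\ref{prop:lowerBoundEig}) requires only the existence of such a subspace, your non-constructive route loses nothing for the purposes of this paper, though it does not deliver the explicit construction that the authors advertise immediately after the theorem statement.
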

Not only is $(n - k + 1)(m - k + 1)$ shown to be an upper bound on the dimension of such subspaces, but an explicit method of construction is given that produces such a subspace that attains the bound.

In analogy with the Schmidt rank for pure states, the \emph{Schmidt number} \cite{TH00} of a mixed state $\rho$ is defined to be the least natural number $k$ such that $\rho$ can be written as
\[
    \rho = \sum_i p_i \ketbra{v_i}{v_i},
\]

\noindent where $SR(\ket{v_i}) \leq k$ for all $i$ and $\{ p_i \}$ forms a probability distribution. The Schmidt number of a state can be thought of as a rough measure of how entangled that state is. One case that is of particular interest is when $SN(\rho) = 1$, in which case $\rho$ is said to be \emph{separable}. It is not difficult to check that $\rho$ is separable if and only if it can be written as $\rho = \sum_i X_i \otimes Y_i$ for some $\big\{ X_i \big\}, \big\{ Y_i \big\} \geq 0$.

It has been shown \cite{HHH96} that if $m = 2$ and $n = 2$ or $n = 3$ then $\rho$ is separable if and only if $(id_n \otimes T)(\rho) \geq 0$, where  $T$ is the transpose map. The fact that the transpose map can be used to determine separability in small dimensions has led to the study of \emph{positive partial transpose (PPT)} states in arbitrary dimensions \cite{P96}, which are density operators $\rho$ such that $(id_n \otimes T)(\rho) \geq 0$. Throughout the rest of this paper, we will write the partial transpose operation $(id_n \otimes T)(\rho)$ as $\rho^\Gamma$.

Finally, we present without proof a well-known result that shows an intricate connection between $k$-block positivity of operators and the Schmidt number of operators. (For instance, it easily follows from the recently-explored dual cone relationship of $k$-positivity and Schmidt number \cite{S08,SSZ09}.)
\begin{lemma}\label{prop:kPos}
    Let $X \in \cl{L}(\cl{H}_n) \otimes \cl{L}(\cl{H}_m)$. Then $X$ is $k$-block positive if and only if
    \begin{align*}
        \Tr(X\rho) \geq 0 \quad \forall \, \rho \in \cl{L}(\cl{H}_n) \otimes \cl{L}(\cl{H}_m) \text{ with } SN(\rho) \leq k.
    \end{align*}
\end{lemma}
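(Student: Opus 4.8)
The plan is to pass from the mixed-state condition on the right-hand side to an equivalent condition involving only pure states, and then to recognize that pure-state condition as $k$-block positivity via the Choi--Jamiolkowski isomorphism.

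First I would reduce to pure states. Unwinding the definition of Schmidt number, a density operator $\rho$ satisfies $SN(\rho) \leq k$ exactly when $\rho = \sum_i p_i \ketbra{v_i}{v_i}$ for some probability distribution $\{p_i\}$ and vectors with $SR(\ket{v_i}) \leq k$. Since $\Tr(X\,\cdot\,)$ is linear, $\Tr(X\ketbra{v}{v}) = \bra{v}X\ket{v}$, and a convex combination of nonnegative numbers is nonnegative, the right-hand side of the lemma holds if and only if
\[
  \bra{v}X\ket{v} \geq 0 \quad \text{for every } \ket{v} \in \cl{H}_n \otimes \cl{H}_m \text{ with } SR(\ket{v}) \leq k,
\]
the rank-one projections $\ketbra{v}{v}$ of Schmidt rank at most $k$ already being among the admissible $\rho$. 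So it suffices to show that $X$ is $k$-block positive if and only if this displayed inequality holds.

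For the remaining equivalence I would let $\Phi : \cl{L}(\cl{H}_n) \to \cl{L}(\cl{H}_m)$ denote the map whose Choi matrix is $X$. Because every positive operator on $\cl{H}_k \otimes \cl{H}_n$ is a nonnegative combination of rank-one projections and $id_k \otimes \Phi$ is linear, $\Phi$ is $k$-positive if and only if $\bra{w}(id_k \otimes \Phi)(\ketbra{z}{z})\ket{w} \geq 0$ for all $\ket{z} \in \cl{H}_k \otimes \cl{H}_n$ and all $\ket{w} \in \cl{H}_k \otimes \cl{H}_m$. Expanding $\ket{z} = \sum_{i=1}^k \ket{e_i} \otimes \ket{z_i}$, $\ket{w} = \sum_{i=1}^k \ket{e_i} \otimes \ket{w_i}$ and $\ketbra{z}{z}$ accordingly, and then substituting the standard relation that recovers $\Phi(\ketbra{e_a}{e_b})$ as a compression of $X$ in its first tensor factor, a direct computation collapses this quantity to a positive multiple of $\bra{v}X\ket{v}$, where $\ket{v} := \sum_{i=1}^k \overline{\ket{z_i}} \otimes \ket{w_i}$ --- the complex conjugates coming from the conjugation built into the Choi--Jamiolkowski isomorphism. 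Being a sum of $k$ product vectors, $\ket{v}$ has Schmidt rank at most $k$; conversely, any $\ket{v} = \sum_{i=1}^k \ket{a_i} \otimes \ket{b_i}$ of Schmidt rank at most $k$ is produced by taking $\ket{z_i} = \overline{\ket{a_i}}$ and $\ket{w_i} = \ket{b_i}$. Hence the two families of inequalities coincide and the lemma follows.

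I expect the only genuinely computational point to be the index bookkeeping in the last step --- turning $\bra{w}(id_k \otimes \Phi)(\ketbra{z}{z})\ket{w}$ into $\bra{v}X\ket{v}$ while keeping the complex conjugation and the order of the tensor factors straight; the rest is routine manipulation of definitions and of rank-one decompositions of positive operators. As the preceding remark indicates, the statement also drops out immediately from the dual-cone relationship between $k$-positivity and Schmidt number in \cite{S08,SSZ09}, but the self-contained argument above seems worth spelling out for a paper that relies so heavily on this correspondence.
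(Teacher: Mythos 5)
Your argument is correct. Note, though, that the paper does not actually prove this lemma: it is stated explicitly ``without proof,'' with a pointer to the dual-cone relationship between $k$-positive maps and operators of Schmidt number at most $k$ established in \cite{S08,SSZ09}. So there is nothing in the paper to compare your computation against step by step; what you have written is the standard self-contained derivation that those references (and the remark in the paper) implicitly rely on. Your two reductions are both sound: the passage from mixed states of Schmidt number at most $k$ to pure states of Schmidt rank at most $k$ is immediate from convexity, and the identity
\[
\bra{w}\,(id_k \otimes \Phi)(\ketbra{z}{z})\,\ket{w} \;=\; n\,\bra{v}X\ket{v}, \qquad \ket{v} = \sum_{i=1}^k \overline{\ket{z_i}} \otimes \ket{w_i},
\]
does come out of the bookkeeping exactly as you describe, with the conjugate on the first factor placed correctly so that $\Phi(\ketbra{z_i}{z_j}) = n\,(\overline{\bra{z_i}}\otimes I)\,X\,(\overline{\ket{z_j}}\otimes I)$; the surjectivity of the correspondence onto all vectors of Schmidt rank at most $k$ follows from the Schmidt decomposition as you note. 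One small point worth making explicit if you write this up: restricting the $k$-positivity test to rank-one inputs $\ketbra{z}{z}$ and to diagonal matrix elements $\bra{w}\cdot\ket{w}$ is justified because every positive operator is a nonnegative combination of rank-one projections and because, over a complex Hilbert space, $\bra{w}A\ket{w}\geq 0$ for all $\ket{w}$ already characterizes positivity of $A$. The trade-off between the two routes is the usual one: the dual-cone citation is shorter but imports machinery, while your direct computation is elementary and keeps the paper self-contained.
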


\section{Vector Norms}\label{sec:vectorNorms}

With the Schmidt Decomposition in hand, we can define a new family of norms that generalize the standard Euclidean norm.
\begin{defn}
  Let $\ket{v} \in \cl{H}_n \otimes \cl{H}_m$ and let $1 \leq k \leq m$. Then we define the {\em $k$th vector norm} of $\ket{v}$, denoted $\big\| \ket{v} \big\|_{s(k)}$, by
  \begin{align*}
    \big\| \ket{v} \big\|_{s(k)} & := \sup_{\ket{w}} \Big\{ \big| \braket{w}{v} \big| : SR(\ket{w}) \leq k \Big\}.
  \end{align*}
\end{defn}

\begin{remark}\label{Schmidtvectorrem1}
{\rm Note that even though this definition is only stated for unit vectors $\ket{v}$, it extends in the obvious way to a norm on all of $\cl{H}_n \otimes \cl{H}_m$. These norms were very recently considered independently in \cite{CKo09,CKS09} as a tool for detecting $k$-block positivity of operators. We shall return to this topic below. Intuitively, the vector $k$-norm has a simple interpretation in quantum information, as it can be viewed as a measure of how close a given state is to a state of Schmidt rank at most $k$. This is made more precise below. The case of $k = m$ is very familiar: $\big\| \ket{v} \big\|_{s(m)} = \big\| \ket{v} \big\|$. Also, it is clear from the definition that $\big\| \ket{v} \big\|_{s(k)} \leq \big\| \ket{v} \big\|$ for all $k$, and it is not difficult to see that we have an increasing family of norms leading up to the standard Euclidean norm:
\begin{align*}
  \big\| \ket{v} \big\|_{s(1)} \leq \big\| \ket{v} \big\|_{s(2)} \leq \cdots \leq \big\| \ket{v} \big\|_{s(m-1)} \leq \big\| \ket{v} \big\|.
\end{align*}}
\end{remark}

The first result shows that this norm is not particularly difficult to calculate.

\begin{thm}\label{thm:vectorNorm}
  Let $\ket{v} \in \cl{H}_n \otimes \cl{H}_m$ have Schmidt coefficients $\big\{ \alpha_i \big\}$. Then
    \[
      \big\| \ket{v} \big\|_{s(k)} = \sqrt{\sum_{i=1}^{k}\alpha_i^2}.
    \]
\end{thm}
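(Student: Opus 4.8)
The plan is to prove the identity by establishing the two inequalities separately, exploiting the operator-theoretic picture set up just before the statement: associate to $\ket{v}$ the operator $A_v \in \cl{L}(\cl{H}_n,\cl{H}_m)$ so that $\braket{w}{v}$ becomes a Hilbert–Schmidt inner product $\Tr(A_w^* A_v)$, and $SR(\ket{w}) \leq k$ becomes $\mathrm{rank}(A_w) \leq k$. The Schmidt coefficients $\{\alpha_i\}$ are then exactly the singular values $s_1 \geq s_2 \geq \cdots$ of $A_v$, and the normalization $\big\|\ket{v}\big\| = 1$ says $\sum_i \alpha_i^2 = 1$ in general, though I will not need that; I will just carry $\big\| \ket{v} \big\|$ through and work with a fixed Schmidt decomposition $\ket{v} = \sum_{i=1}^r \alpha_i \ket{u_i} \otimes \ket{v_i}$ with $r = SR(\ket{v})$.

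First I would prove $\big\| \ket{v} \big\|_{s(k)} \geq \sqrt{\sum_{i=1}^k \alpha_i^2}$ by exhibiting an explicit competitor. Take $\ket{w} := \big(\sum_{i=1}^k \alpha_i^2\big)^{-1/2} \sum_{i=1}^k \alpha_i \ket{u_i} \otimes \ket{v_i}$ (truncating the Schmidt decomposition to its $k$ largest terms and renormalizing). Since the $\{\ket{u_i}\}$ and $\{\ket{v_i}\}$ are orthonormal sets, this $\ket{w}$ is a unit vector with $SR(\ket{w}) \leq k$, and a direct computation using orthonormality gives $\braket{w}{v} = \big(\sum_{i=1}^k \alpha_i^2\big)^{-1/2}\sum_{i=1}^k \alpha_i^2 = \sqrt{\sum_{i=1}^k \alpha_i^2}$, which is the claimed lower bound.

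The substantive direction is the upper bound $\big| \braket{w}{v} \big| \leq \sqrt{\sum_{i=1}^k \alpha_i^2}$ for every unit vector $\ket{w}$ with $SR(\ket{w}) \leq k$. I would pass to the operator picture: $\big|\braket{w}{v}\big| = \big|\Tr(A_w^* A_v)\big| = \big|\langle A_w, A_v\rangle_{HS}\big|$ where $A_w$ has rank at most $k$ and $\|A_w\|_{HS} = 1$. Writing $A_v = \sum_{i=1}^r \alpha_i \ket{u_i}\overline{\bra{v_i}}$ in its singular value decomposition, the key inequality is that $\big|\langle A_w, A_v\rangle_{HS}\big| \le \sqrt{\sum_{i=1}^k \alpha_i^2}$ whenever $\mathrm{rank}(A_w)\le k$ and $\|A_w\|_{HS}=1$. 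This is a standard fact but worth spelling out: let $P$ be the orthogonal projection onto the range of $A_w$ (rank $\le k$); then $\langle A_w, A_v\rangle_{HS} = \langle A_w, P A_v\rangle_{HS}$, so by Cauchy–Schwarz $\big|\langle A_w, A_v\rangle_{HS}\big| \le \|A_w\|_{HS}\,\|PA_v\|_{HS} = \|PA_v\|_{HS}$. It then remains to check that $\|PA_v\|_{HS}^2 \le \sum_{i=1}^k \alpha_i^2$ for any rank-$\le k$ projection $P$ — i.e. that the Hilbert–Schmidt norm of a rank-$k$ compression of $A_v$ is maximized by compressing onto the top $k$ left singular vectors. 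I would deduce this from the min–max/Ky Fan characterization of singular values: $\|PA_v\|_{HS}^2 = \sum_j s_j(PA_v)^2$ and $s_j(PA_v) \le s_j(A_v) = \alpha_j$ for all $j$, while $s_j(PA_v) = 0$ for $j > k$ since $\mathrm{rank}(PA_v) \le k$; summing gives $\|PA_v\|_{HS}^2 \le \sum_{j=1}^k \alpha_j^2$. Combining the two inequalities yields the theorem. The main obstacle is precisely this last compression estimate — making rigorous that no rank-$k$ subspace captures more "mass" of $A_v$ than the span of its leading singular vectors — but it is a clean consequence of the singular-value interlacing inequality $s_j(PA_v) \leq s_j(A_v)$ together with the rank bound, so no delicate work is needed.
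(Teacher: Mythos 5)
Your proposal is correct. The lower-bound half (truncate the Schmidt decomposition to its top $k$ terms and renormalize) is exactly the paper's argument. For the substantive upper bound, however, you take a genuinely different route. The paper stays in the ``coordinate'' picture: it expands $\big|\braket{w}{v}\big|$ against the two Schmidt bases, obtains ${\bf \alpha}^* D {\bf \beta}$ with $D$ doubly-substochastic, and then invokes the Hardy--Littlewood--Polya majorization theorem followed by Cauchy--Schwarz on the coefficient vectors. You instead work entirely in the operator picture: writing $\braket{w}{v} = \Tr(A_w^* A_v)$, inserting the rank-$\le k$ projection $P$ onto the range of $A_w$ (so that $\langle A_w, A_v\rangle_{HS} = \langle A_w, PA_v\rangle_{HS}$), applying Cauchy--Schwarz in the Hilbert--Schmidt inner product, and then bounding $\big\|PA_v\big\|_{HS}^2 \le \sum_{j=1}^k \alpha_j^2$ via the standard singular-value inequality $s_j(PA_v) \le \|P\|\, s_j(A_v)$ together with $s_j(PA_v) = 0$ for $j > k$. (Your word ``interlacing'' is a slight misnomer --- the fact you need is the submultiplicativity-type bound $s_j(AB) \le \|A\|\,s_j(B)$ --- but the inequality itself is correct and standard.) Your route avoids majorization machinery entirely and in effect directly establishes the identity $\big\|\ket{v}\big\|_{s(k)}^2 = \big\|A_v^* A_v\big\|_k$ that the paper only records as a remark after the theorem; the paper's route, by contrast, makes the doubly-substochastic structure of the overlap matrix explicit, which is a reusable observation in its own right. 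Both arguments are complete and rigorous.
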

\begin{proof}
  To see that $\big\| \ket{v} \big\|_{s(k)} \geq \sqrt{\sum_{i=1}^{k}\alpha_i^2}$, use the Schmidt Decomposition to write $\ket{v} = \sum_{i=1}^m \alpha_i \ket{u_i} \otimes \ket{v_i}$. Now let
  \[
    \ket{w} = \frac{\sum_{i=1}^k \alpha_i \ket{u_i} \otimes \ket{v_i}}{\sqrt{\sum_{i=1}^k \alpha_i^2}}.
  \]

  \noindent Observe that $SR(\ket{w}) \leq k$. Some algebra then reveals that
  \begin{align*}
    \braket{w}{v} & = \frac{1}{\sqrt{\sum_{i=1}^k \alpha_i^2}} \Big( \sum_{i=1}^{m}\alpha_i\bra{u_i} \otimes \bra{v_i} \Big) \Big(\sum_{i=1}^k \alpha_i \ket{u_i} \otimes \ket{v_i}\Big) \\
    & = \frac{1}{\sqrt{\sum_{i=1}^k \alpha_i^2}} \sum_{i=1}^{m}\sum_{j=1}^k \alpha_i \alpha_j \braket{u_i}{u_j} \otimes \braket{v_i}{v_j} \\
    & = \frac{1}{\sqrt{\sum_{i=1}^k \alpha_i^2}} \sum_{j=1}^k \alpha_j^2
     = \sqrt{\sum_{i=1}^k \alpha_i^2}.
  \end{align*}

  \noindent To see the opposite inequality, consider some fixed $\ket{w} \in \cl{H}_n \otimes \cl{H}_m$ with $SR(\ket{w}) \leq k$ and Schmidt Decomposition $\ket{w} = \sum_{i=1}^k \beta_i \ket{w_i} \otimes \ket{x_i}$. Then
  \begin{align*}
    \big| \braket{w}{v} \big| & = \left| \Big( \sum_{i=1}^{m}\alpha_i\bra{u_i} \otimes \bra{v_i} \Big) \Big(\sum_{i=1}^k \beta_i \ket{w_i} \otimes \ket{x_i}\Big) \right| \leq \sum_{i=1}^{m}\sum_{j=1}^k\alpha_i\beta_j\big|\braket{u_i}{w_j}\braket{v_i}{x_j}\big| = {\bf \alpha}^*D{\bf \beta}
  \end{align*}

\noindent where ${\bf \alpha}^{T} = (\alpha_1,\cdots,\alpha_m)$ and ${\bf \beta}^{T} = (\beta_1,\cdots,\beta_k,0,\cdots,0)$ are vectors of Schmidt coefficients, and $D$ is the matrix given by $D_{ij} = \big|\braket{u_i}{w_j}\braket{v_i}{x_j}\big|$ in which we have extended $\{\ket{u_i}\}$, $\{\ket{w_j}\}$ and $\{\ket{x_j}\}$ to orthonormal bases of their respective spaces. Observe that $D$ is doubly-sub-stochastic (i.e., each of its row and column sums is no greater than $1$) so the Hardy-Littlewood-Polya Theorem tells us that the vector ${\bf \gamma} := D{\bf \beta}$ satisfies
\begin{align*}
    \sum_{i=1}^j    \gamma_i \leq \sum_{i=1}^j \beta_i \quad \forall \, 1 \leq j \leq m.
\end{align*}

\noindent It follows from some simple linear algebra and the Cauchy-Schwarz inequality that
    \begin{align*}
    {\bf \alpha}^*D{\bf \beta} \leq {\bf \alpha}^*{\bf \beta} \leq \sqrt{\sum_{i=1}^k \alpha_i^2}\sqrt{\sum_{i=1}^k \beta_i^2} = \sqrt{\sum_{i=1}^k \alpha_i^2},
  \end{align*}
and the result follows.
\end{proof}

One useful way of looking at Theorem~\ref{thm:vectorNorm} is to notice that, because the Schmidt coefficients of $\ket{v}$ are the singular values of the operator $A_v$ to which $\ket{v}$ is associated in the proof of the Schmidt Decomposition Theorem, it follows that $\big\| \ket{v} \big\|_{s(k)}^2 = \big\| A_v^*A_v \big\|_k$.

The following is a straightforward consequence of Theorem~\ref{thm:vectorNorm} and is thus presented without proof.

\begin{cor}\label{cor:vecEquiv}
    Let $\ket{v} \in \cl{H}_n \otimes \cl{H}_m$ and suppose $h \leq k$. Then
    \begin{align*}
      \big\| \ket{v} \big\|_{s(h)} & \leq \big\| \ket{v} \big\|_{s(k)} \leq \sqrt{\frac{k}{h}} \big\| \ket{v} \big\|_{s(h)}.
    \end{align*}

    \noindent Furthermore, equality is achieved on the left if and only if $\big\| \ket{v} \big\|_{s(h)} = 1$ if and only if $SR(\ket{v}) \leq h$. Equality is achieved on the right if and only if the $k$ largest Schmidt coefficients of $\ket{v}$ are equal.
\end{cor}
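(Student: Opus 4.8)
The plan is to derive everything directly from Theorem~\ref{thm:vectorNorm}, which reduces the statement to an elementary fact about the partial sums of the squared Schmidt coefficients $\alpha_1^2 \geq \alpha_2^2 \geq \cdots \geq \alpha_m^2$ (ordered this way since Schmidt coefficients are singular values). Writing $S_j := \sum_{i=1}^j \alpha_i^2$, Theorem~\ref{thm:vectorNorm} says $\big\|\ket{v}\big\|_{s(j)} = \sqrt{S_j}$, so the left inequality $\big\|\ket{v}\big\|_{s(h)} \leq \big\|\ket{v}\big\|_{s(k)}$ is just the monotonicity $S_h \leq S_k$ for $h \leq k$, which holds because the summands are non-negative. (This also recovers the chain of inequalities already noted in Remark~\ref{Schmidtvectorrem1}.)

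For the right inequality, the key observation is that because the $\alpha_i^2$ are non-increasing, the average of the first $h$ of them is at least the average of any later ones; concretely $\alpha_i^2 \leq \frac{1}{h}S_h$ for every $i \geq 1$, hence $S_k = S_h + \sum_{i=h+1}^k \alpha_i^2 \leq S_h + (k-h)\cdot\frac{1}{h}S_h = \frac{k}{h}S_h$. Taking square roots gives $\big\|\ket{v}\big\|_{s(k)} \leq \sqrt{k/h}\,\big\|\ket{v}\big\|_{s(h)}$.

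For the equality conditions: on the left, $S_h = S_k$ forces $\alpha_i = 0$ for $h < i \leq k$; combined with $k \leq m$ being arbitrary up to $m$ and the normalization $\big\|\ket{v}\big\| = 1$ (so $S_m = 1$), this is equivalent to $SR(\ket{v}) \leq h$, which is in turn equivalent to $S_h = 1$, i.e. $\big\|\ket{v}\big\|_{s(h)} = 1$. (One should be slightly careful to phrase this using only $h$ and $k$ with $k \le m$; the cleanest route is to note $S_h = S_k$ for the \emph{given} $k$ already forces $\alpha_{h+1} = 0$, hence $SR(\ket v)\le h$, and conversely $SR(\ket v)\le h$ makes all $\big\|\ket v\big\|_{s(j)}$ equal to $1$ for $j \ge h$.) On the right, tracing through the inequality $\alpha_i^2 \leq \frac{1}{h}S_h$ shows equality in $S_k \leq \frac{k}{h}S_h$ holds iff $\alpha_i^2 = \frac1h S_h$ for all $i \le k$, i.e. the first $k$ squared Schmidt coefficients — equivalently the first $k$ Schmidt coefficients — are all equal.

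There is no real obstacle here; the only thing requiring a little care is stating the left-hand equality condition without implicitly quantifying over a range of $k$'s, and making sure the normalization convention $\big\|\ket v\big\| = 1$ (so that $S_m = 1$ and $SR(\ket v) \le h \iff S_h = 1$) is invoked correctly. This is exactly why the corollary is labelled "a straightforward consequence" and presented without proof in the paper.
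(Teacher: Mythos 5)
Your proof is correct and is precisely the derivation from Theorem~\ref{thm:vectorNorm} that the paper intends when it presents the corollary without proof as a ``straightforward consequence,'' including the correct handling of the two equality cases. One small slip: the inequality $\alpha_i^2 \leq \tfrac{1}{h}S_h$ holds for $i \geq h$, not for every $i \geq 1$ (indeed $\alpha_1^2 \geq \tfrac{1}{h}S_h$), but since you only invoke it for $h < i \leq k$ the argument is unaffected.
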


\begin{remark}{\rm
Corollary~\ref{cor:vecEquiv} supports the interpretation of the $k$th vector norms discussed in Remark~\ref{Schmidtvectorrem1}, as it shows explicitly $\big\| \ket{v} \big\|_{s(k)} = 1$ (the largest possible value that norm can take on pure states) if and only if $SR(\ket{v}) \leq k$. On the other hand, consider the maximally-entangled state $\ket{e} := \frac{1}{\sqrt{n}}\sum_{i=1}^n\ket{e_i}\otimes\ket{e_i} \in \cl{H}_n \otimes \cl{H}_n$ -- Corollary~\ref{cor:vecEquiv} also implies that $\big\| \ket{e} \big\|_{s(k)} = \sqrt{\frac{k}{n}}$, which is the smallest the norm can ever be on pure states. We can make this interpretation of the vector norms more precise by using the trace distance and fidelity. It is not difficult to show via Equations~\eqref{eq:traceDistPure} and~\eqref{eq:fidelityPure} that
  \begin{align*}
    \sqrt{1 - \big\| \ket{v} \big\|_{s(k)}} & = \inf_{\ket{w}} \Big\{ \delta(\ketbra{v}{v},\ketbra{w}{w}) : SR(\ket{w}) \leq k \Big\} \text{ and } \\
    \big\| \ket{v} \big\|_{s(k)}^2 & = \sup_{\sigma} \Big\{ F(\ketbra{v}{v},\sigma) : SN(\sigma) \leq k \Big\}.
  \end{align*}}
\end{remark}

\section{Operator Norms}\label{sec:MatrixNorms}

In this section we define and investigate operator norms determined by the Schmidt decomposition. The vector norms discussed above are recovered in the special case of rank one operators, and will be used to derive an upper bound for the operator norms.
\begin{defn}\label{defn:operatorKNorm}
  Let $X \in \cl{L}(\cl{H}_n) \otimes \cl{L}(\cl{H}_m)$ and $1 \leq k \leq m$. Then we define the {\em $k$th operator norms} of $X$, denoted $\big\| X \big\|_{S(k)}$, by
  \begin{align*}
    \big\|X\big\|_{S(k)} & := \sup_{\ket{v},\ket{w}} \Big\{ \big|\bra{w}X\ket{v}\big| : SR(\ket{v}),SR(\ket{w}) \leq k \Big\}.
  \end{align*}
\end{defn}

Some minor observations are that $\big\| X \big\|_{S(m)} = \big\| X \big\|$ and $\big\| X \big\|_{S(k)} \leq \big\| X \big\|$ for all $k$. Further, in analogy with the vector norms, the operator norms form an increasing family of norms that lead up to the standard operator norm:
\begin{align*}
  \big\| X \big\|_{S(1)} \leq \big\| X \big\|_{S(2)} \leq \cdots \leq \big\| X \big\|_{S(m-1)} \leq \big\| X \big\|.
\end{align*}
Moreover, although $\big\| X^* \big\|_{S(k)} = \big\| X \big\|_{S(k)}$, it is {\em not} the case in general that $\big\| X^*X \big\|_{S(k)} = \big\| X \big\|_{S(k)}^2$. They also do not satisfy any natural submultiplicativity relationships.
  \begin{remark}{\rm
  Before continuing, let us comment briefly on this definition. One might notice that we could have just as well defined another generalization of the $k$th vector norms to the case of operators by treating $\cl{L}(\cl{H}_n) \otimes \cl{L}(\cl{H}_m)$ as a Hilbert space endowed with the Hilbert-Schmidt inner product. Then each operator $X \in \cl{L}(\cl{H}_n) \otimes \cl{L}(\cl{H}_m)$ can be thought of as a vector $x \in \cl{H}_{n^2} \otimes \cl{H}_{m^2}$ and we could define its  norm for $1 \leq k \leq m^2$ to be $\big\|x \big\|_{s(k)}$, the vector norm of the corresponding vector. However, one motivation for investigating the norm given by Definition~\ref{defn:operatorKNorm} instead is that the vector norms are in a sense trivial since they can be computed efficiently, as shown in Theorem~\ref{thm:vectorNorm}. Because the quantum separability problem is known to be NP-HARD \cite{G03}, and the problem of determining $k$-block positivity of an operator is believed to be very difficult, it seems unlikely that the vector norm of an operator can tell us very much about its block positivity or Schmidt number. On the other hand, we will see in Section~\ref{sec:SpectralInequalities} that the $k$th operator norm is a very powerful tool for detecting $k$-block positivity. Additionally, the operator norms of Definition~\ref{defn:operatorKNorm} build on the general principle that properties of pure states are easier to determine than properties of mixed states. We will see in Proposition~\ref{prop:rankOneNorm} that the operator norm for pure states reduces simply to the square of the vector norm of the corresponding pure vector state. Thus, the operator norms can efficiently be computed for pure states, but we will see that computing them for general mixed states is very difficult.
  }\end{remark}

The following result connects the  operator norms back to the vector norms in the pure state case, and shows that we can efficiently compute the operator norms when the operator under consideration has rank $1$. The proof easily follows from the relevant definitions and hence we leave it to the interested reader.
\begin{prop}\label{prop:rankOneNorm}
    Let $X = \ketbra{w}{v} \in \cl{L}(\cl{H}_n) \otimes \cl{L}(\cl{H}_m)$ be a rank-$1$ operator. Then
    \begin{align*}
      \big\| X \big\|_{S(k)} & = \big\| \ket{w} \big\|_{s(k)}\big\| \ket{v} \big\|_{s(k)}.
  \end{align*}
\end{prop}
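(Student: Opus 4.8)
The plan is to unwind both definitions and exploit the fact that for a rank-one operator the bilinear form $\bra{b}X\ket{a}$ factors into a product of two independent inner products. First I would observe that for any $\ket{a},\ket{b}$ with $SR(\ket{a}),SR(\ket{b})\leq k$, writing $X=\ketbra{w}{v}$ gives $\bra{b}X\ket{a}=\braket{b}{w}\braket{v}{a}$, and hence $\big|\bra{b}X\ket{a}\big|=\big|\braket{b}{w}\big|\cdot\big|\braket{v}{a}\big|$, a product of two non-negative quantities depending on disjoint variables.

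Next I would pass to the supremum. Since the two factors depend on $\ket{b}$ and $\ket{a}$ separately and are non-negative, the supremum of the product over the independent constraint sets $\{SR(\ket{b})\leq k\}$ and $\{SR(\ket{a})\leq k\}$ equals the product of the individual suprema. By the definition of the vector norm (using $|\braket{x}{y}|=|\braket{y}{x}|$, so that the supremum is insensitive to which slot of the inner product the fixed vector sits in), $\sup\big\{|\braket{b}{w}|:SR(\ket{b})\leq k\big\}=\big\|\ket{w}\big\|_{s(k)}$ and $\sup\big\{|\braket{v}{a}|:SR(\ket{a})\leq k\big\}=\big\|\ket{v}\big\|_{s(k)}$, which yields $\big\|X\big\|_{S(k)}=\big\|\ket{w}\big\|_{s(k)}\big\|\ket{v}\big\|_{s(k)}$.

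There is no genuine obstacle here; the only step deserving a word of care is the factorization of the supremum, which I would justify by the two-sided argument: for every feasible pair one has the upper bound $|\braket{b}{w}||\braket{v}{a}|\leq\big\|\ket{w}\big\|_{s(k)}\big\|\ket{v}\big\|_{s(k)}$, and conversely a near-optimizer $\ket{b_0}$ for the first factor paired with a near-optimizer $\ket{a_0}$ for the second is itself feasible for the joint supremum. Given that the paper already leaves this proof to the reader, I would keep the write-up to these few lines.
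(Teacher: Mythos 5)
Your proof is correct and is exactly the routine argument the paper has in mind when it leaves this proposition to the reader: the rank-one form factors the bilinear expression as $\bigl|\braket{b}{w}\bigr|\cdot\bigl|\braket{v}{a}\bigr|$, and the supremum of a product of non-negative functions of independent variables over independent constraint sets is the product of the suprema. Your explicit two-sided justification of that factorization step is the only point needing care, and you handle it properly.
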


We now present an important example to make use of Proposition~\ref{prop:rankOneNorm}.
\begin{exam}\label{exam:MatrixNormChoi}{\rm
  Recall the rank-$1$ projection operator $E = \frac{1}{n}\sum_{i,j=1}^n \ketbra{e_i}{e_j} \otimes \ketbra{e_i}{e_j} \in \cl{L}(\cl{H}_n) \otimes \cl{L}(\cl{H}_n)$. By Proposition~\ref{prop:rankOneNorm} we have that
  \begin{align*}
    \big\| E \big\|_{S(k)} & = \big\| \sum_{i=1}^n \frac{1}{\sqrt{n}}\ket{e_i} \otimes \ket{e_i} \big\|_{s(k)}^2 = \sum_{i=1}^{k}\left(\frac{1}{\sqrt{n}}\right)^2 = \frac{k}{n}.
  \end{align*}

  We will see that this simple example can be used to show that some inequalities that we derive in the next section are tight. It will also have applications to bound entanglement in Section~\ref{sec:WernerStates}.}
\end{exam}

The following proposition shows if $X$ is positive then it is enough to take the supremum only over $\ket{v}$ in the definition of the $k$th operator norms.
\begin{prop}\label{prop:MatMultDiffVectors}
  Let $X \in \cl{L}(\cl{H}_n) \otimes \cl{L}(\cl{H}_m)$ be positive semidefinite. Then
  \begin{align}\label{eq:MatMultDiffVectors}
    \big\| X \big\|_{S(k)} & = \sup_{\ket{v}} \big\{ \bra{v}X\ket{v} : SR(\ket{v}) \leq k \big\} \\ \label{eq:MatMultSN}
    & = \sup_{\rho} \big\{ \Tr(X \rho) : SN(\rho) \leq k \big\}.
  \end{align}
\end{prop}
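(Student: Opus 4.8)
The plan is to prove the two claimed equalities by showing a chain of inequalities that closes up. Write $A := \sup_{\ket{v}} \{ \bra{v} X \ket{v} : SR(\ket{v}) \leq k \}$ and $B := \sup_\rho \{ \Tr(X\rho) : SN(\rho) \leq k \}$. Since $X \geq 0$, every term $\bra{w} X \ket{v}$ appearing in the definition of $\|X\|_{S(k)}$ can be controlled by the diagonal terms via the operator Cauchy--Schwarz inequality $|\bra{w} X \ket{v}| \leq \sqrt{\bra{w} X \ket{w}} \sqrt{\bra{v} X \ket{v}}$, valid because $X$ is positive semidefinite (apply Cauchy--Schwarz to the semi-inner product $(x,y) \mapsto \bra{x} X \ket{y}$). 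Taking the supremum over $\ket{v}, \ket{w}$ with Schmidt rank at most $k$ gives $\|X\|_{S(k)} \leq A$; the reverse inequality $A \leq \|X\|_{S(k)}$ is immediate by taking $\ket{w} = \ket{v}$ in the definition, so $\|X\|_{S(k)} = A$.

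Next I would show $A = B$. The inequality $A \leq B$ is trivial: for a unit vector $\ket{v}$ with $SR(\ket{v}) \leq k$, the pure state $\rho = \ketbra{v}{v}$ satisfies $SN(\rho) \leq k$ and $\Tr(X\rho) = \bra{v} X \ket{v}$, so every quantity in the supremum defining $A$ appears in the supremum defining $B$. For the reverse inequality $B \leq A$, take any $\rho$ with $SN(\rho) \leq k$ and write it, using the definition of Schmidt number, as $\rho = \sum_i p_i \ketbra{v_i}{v_i}$ with $\{p_i\}$ a probability distribution and $SR(\ket{v_i}) \leq k$ for all $i$. Then $\Tr(X\rho) = \sum_i p_i \bra{v_i} X \ket{v_i} \leq \sum_i p_i A = A$, since each $\bra{v_i} X \ket{v_i} \leq A$ by definition of $A$. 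Taking the supremum over such $\rho$ gives $B \leq A$, hence $A = B$, completing the proof.

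The only mildly delicate point is the operator Cauchy--Schwarz step, so I would state it cleanly: since $X \geq 0$, the form $\langle \cdot, \cdot \rangle_X$ defined by $\langle x, y \rangle_X := \bra{x} X \ket{y}$ is a positive semidefinite sesquilinear form, and the Cauchy--Schwarz inequality for such forms gives $|\bra{w} X \ket{v}|^2 \leq \bra{w} X \ket{w} \bra{v} X \ket{v}$ without any invertibility assumption on $X$. Everything else is bookkeeping with suprema and convex combinations. I do not anticipate a genuine obstacle; the main thing to get right is that the decomposition of $\rho$ into Schmidt-rank-$\leq k$ pure states is exactly the content of the definition of Schmidt number, so no separate argument is needed there.
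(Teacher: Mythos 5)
Your proposal is correct and follows essentially the same route as the paper: the key step in both is the bound $|\bra{w}X\ket{v}|^2 \leq \bra{w}X\ket{w}\,\bra{v}X\ket{v}$ (which the paper derives in coordinates via the spectral decomposition and the Euclidean Cauchy--Schwarz inequality, whereas you invoke the Cauchy--Schwarz inequality for the positive semidefinite sesquilinear form directly---a slightly cleaner packaging of the same idea), and the second equality is handled in both by convexity of the Schmidt-number decomposition. No gaps.
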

\begin{proof}
  To show the first equality, write $X$ in its Spectral Decomposition as $X = \sum_i \lambda_i \ketbra{v_i}{v_i}$. Observe that the set of states $\ket{v}$ with $SR(\ket{v}) \leq k$ is compact, hence we can find a particular $\ket{v}$ with $SR(\ket{v}) \leq k$ such that $\sup_{\ket{v}} \{ \bra{v}X\ket{v} : SR(\ket{v}) \leq k \} = \sum_i \lambda_i |\braket{v_i}{v}|^2$. Then for any $\ket{w}$ with $SR(\ket{w}) \leq k$, we have that $\bra{w}X\ket{w} = \sum_i \lambda_i |\braket{v_i}{w}|^2 \leq \sum_i \lambda_i |\braket{v_i}{v}|^2$. Now define the $i^{th}$ component of two vectors $v^\prime$ and $w^\prime$ by $v_i^\prime := \sqrt{\lambda_i}|\braket{v_i}{v}|$ and $w_i^\prime := \sqrt{\lambda_i}|\braket{w}{v_i}|$. Applying the Cauchy-Schwarz inequality to $v^\prime$ and $w^\prime$ gives $|\bra{w}X\ket{v}| \leq \bra{v}X\ket{v}$. The other inequality is trivial. To see the second equality, simply write
  \[
    \sup_{\ket{v}} \big\{ \bra{v}X\ket{v} : SR(\ket{v}) \leq k \big\} = \sup_{\ket{v}} \big\{ \Tr(X\ketbra{v}{v}) : SR(\ket{v}) \leq k \big\},
  \]

  \noindent and note that the maximum on the right cannot become larger when taking the supremum over mixed states since a mixed state can be written as a convex combination of pure states.
\end{proof}

Equation~\eqref{eq:MatMultDiffVectors} captures a well-known property of the operator norm of positive operators in the $k = m$ case. We also note that Proposition~\ref{prop:MatMultDiffVectors} says that the $1$-norm, $\big\|\cdot\big\|_{S(1)}$, when acting on positive operators, coincides with the \emph{local spectral radius} $r^{loc}$ \cite{GPMSCZ09}. That is, if $X$ is positive then $\big\|X\big\|_{S(1)} = r^{loc}(X)$. Equation~\eqref{eq:MatMultSN} is perhaps a more natural way of looking at $\big\| X \big\|_{S(k)}$ from the quantum information perspective.

For a general mixed state $\rho$, one might want to think of $\big\| \rho \big\|_{S(k)}$ as measuring how close $\rho$ is to having Schmidt number of $k$ or less, but this interpretation is not quite right. Consider the following example, which shows that, in contrast to the $k$th vector norm case, it is not the case that $SN(\rho) \leq k$ implies $\big\|\rho\big\|_{S(k)} = \big\|\rho\big\|$.
\begin{exam}\label{ex:BadSNorm}
    {\rm Let $\rho \in \cl{L}(\cl{H}_2) \otimes \cl{L}(\cl{H}_2)$ have the following matrix representation in the standard basis $\{\ket{00},\ket{01},\ket{10},\ket{11}\}$:
    \[
        \rho = \frac{1}{5}\begin{bmatrix}2 & 1 & 1 & 1 \\ 1 & 1 & 1 & 1 \\ 1 & 1 & 1 & 1 \\ 1 & 1 & 1 & 1\end{bmatrix} = \frac{1}{5}\begin{bmatrix}1 & 0 \\ 0 & 0\end{bmatrix} \otimes \begin{bmatrix}1 & 0 \\ 0 & 0\end{bmatrix} + \frac{1}{5}\begin{bmatrix}1 & 1 \\ 1 & 1\end{bmatrix} \otimes \begin{bmatrix}1 & 1 \\ 1 & 1\end{bmatrix}.
    \]

    It is clear that $SN(\rho) = 1$. However, the eigenvector corresponding to the (distinct) maximal eigenvalue $0.8606$ is $\ket{v} := (0.6011, 0.4614, 0.4614, 0.4614)^T$. It is easily verified that $SR(\ket{v}) = 2$, so $\big\|\rho\big\|_{S(1)} < \big\|\rho\big\|$ (in fact, $\big\|\rho\big\|_{S(1)} \approx 0.8571$).}
\end{exam}

\begin{remark}
{\rm Nonetheless, it is the case that if the eigenspace corresponding to the maximal eigenvalue of $\rho$ contains a state $\ket{v}$ with $SR(\ket{v}) \leq k$ then $\big\|\rho\big\|_{S(k)} = \big\|\rho\big\|$. More importantly though, we can see via quantum fidelity that the correct interpretation of $\big\|\rho\big\|_{S(k)}$ is as a measure of how close $\rho$ is to a \emph{pure} state $\ket{v}$ with $SR(\ket{v}) \leq k$. More precisely, it is not difficult to show that
\begin{align*}
  \big\| \rho \big\|_{S(k)} & = \sup_{\ket{v}} \Big\{ F(\rho,\ketbra{v}{v}) : SR(\ket{v}) \leq k \Big\}.
\end{align*}}
\end{remark}

The following corollary shows that the $k$th operator norms are non-increasing under local quantum operations.
\begin{cor}\label{cor:LocalOps}
  Let $X \in \cl{L}(\cl{H}_n) \otimes \cl{L}(\cl{H}_m)$ be positive and let $\Phi :\cl{L}(\cl{H}_m) \rightarrow \cl{L}(\cl{H}_m)$ be a quantum channel (i.e. completely positive and trace-preserving). Then
  \[
    \big\| (id_n \otimes \Phi^\dagger)(X) \big\|_{S(k)} \leq \big\| X \big\|_{S(k)}.
  \]
\end{cor}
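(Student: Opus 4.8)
The plan is to reduce everything to the two variational formulas for the operator norm of a positive operator in Proposition~\ref{prop:MatMultDiffVectors}, together with the fact that a one-sided local channel cannot raise the Schmidt number of a state.

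First I would record the standard duality facts: since $\Phi$ is completely positive so is $\Phi^\dagger$, and since $\Phi$ is trace-preserving $\Phi^\dagger$ is unital; hence $id_n \otimes \Phi^\dagger$ is completely positive and unital, in particular positive, so $Y := (id_n \otimes \Phi^\dagger)(X) \geq 0$. This lets me invoke Equation~\eqref{eq:MatMultSN} for both $Y$ and $X$:
\[
  \big\| Y \big\|_{S(k)} = \sup_\rho \big\{ \Tr(Y\rho) : SN(\rho) \leq k \big\}, \qquad \big\| X \big\|_{S(k)} = \sup_\sigma \big\{ \Tr(X\sigma) : SN(\sigma) \leq k \big\}.
\]
By definition of the dual map (applied on the first tensor factor), $\Tr(Y\rho) = \Tr\big(X\,(id_n \otimes \Phi)(\rho)\big)$, so the first supremum runs over the numbers $\Tr(X\sigma)$ with $\sigma = (id_n\otimes\Phi)(\rho)$ and $SN(\rho)\leq k$. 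Thus it suffices to show that every such $\sigma$ is itself a density operator with $SN(\sigma)\leq k$: then the first supremum is taken over a subset of the states allowed in the second, and $\|Y\|_{S(k)}\leq\|X\|_{S(k)}$ follows at once.

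The one substantive point is this last claim. Trace preservation gives $\Tr(\sigma)=\Tr(\rho)=1$ and complete positivity gives $\sigma\geq 0$. For the Schmidt number, write $\rho = \sum_i p_i\ketbra{v_i}{v_i}$ with each $SR(\ket{v_i})\leq k$ and fix a Kraus decomposition $\Phi(\,\cdot\,)=\sum_t K_t(\,\cdot\,)K_t^*$. Then $(id_n\otimes\Phi)(\ketbra{v_i}{v_i}) = \sum_t \ketbra{w_{i,t}}{w_{i,t}}$ where $\ket{w_{i,t}} := (I_n\otimes K_t)\ket{v_i}$, with $I_n$ the identity operator on $\cl{H}_n$; applying $I_n\otimes K_t$ to a Schmidt decomposition of $\ket{v_i}$ having at most $k$ terms again writes $\ket{w_{i,t}}$ as a sum of at most $k$ product vectors, so $SR(\ket{w_{i,t}})\leq k$. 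Hence $\sigma = \sum_{i,t} p_i\ketbra{w_{i,t}}{w_{i,t}}$ exhibits $\sigma$ (after normalizing the nonzero $\ket{w_{i,t}}$) as a convex combination of pure states of Schmidt rank $\leq k$, i.e.\ $SN(\sigma)\leq k$. I expect this Kraus-operator computation to be the crux; everything else is routine bookkeeping with $\Phi^\dagger$ and an appeal to Proposition~\ref{prop:MatMultDiffVectors}. (Alternatively one can keep the left-hand side in the form \eqref{eq:MatMultDiffVectors} and bound $\bra{v}Y\ket{v} = \Tr\big(X\,(id_n\otimes\Phi)(\ketbra{v}{v})\big)$ for $SR(\ket{v})\leq k$ by $\|X\|_{S(k)}$ using the same Schmidt-number observation; either route works.)
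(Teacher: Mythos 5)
Your proposal is correct and follows essentially the same route as the paper: both reduce to the variational formula of Proposition~\ref{prop:MatMultDiffVectors} via the duality identity $\Tr\big((id_n\otimes\Phi^\dagger)(X)\rho\big)=\Tr\big(X(id_n\otimes\Phi)(\rho)\big)$ and then invoke the fact that Schmidt number is non-increasing under one-sided local channels. The only differences are that you prove that last fact directly with a Kraus decomposition (the paper simply cites \cite{TH00}) and that you explicitly verify positivity of $(id_n\otimes\Phi^\dagger)(X)$ before applying the variational formula to the left-hand side, a point the paper leaves implicit.
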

\begin{proof}
  By Proposition~\ref{prop:MatMultDiffVectors} we know that
  \begin{align*}
    \big\| (id_n \otimes \Phi^\dagger)(X) \big\|_{S(k)} & = \sup_{\rho} \Big\{ \Tr((id_n \otimes \Phi^\dagger)(X) \rho) : SN(\rho) \leq k \Big\} \\
    & = \sup_{\rho} \Big\{ \Tr(X (id_n \otimes \Phi)(\rho)) : SN(\rho) \leq k \Big\}.
  \end{align*}

  \noindent The result follows from the fact that Schmidt number is non-increasing under the action of local quantum channels \cite{TH00}, so $SN((id_n \otimes \Phi^\dagger)(\rho)) \leq k$.
\end{proof}

Finally, the last result of this section makes a crucial connection between the $k$th operator norms and $k$-block positivity of an operator.
\begin{cor}\label{cor:kPosInf1}
  Let $X \in (\cl{L}(\cl{H}_n) \otimes \cl{L}(\cl{H}_m))^{+}$ be positive and let $c \in \bb{R}$. Then $cI - X$ is $k$-block positive if and only if $c \geq \big\|X\big\|_{S(k)}$.
\end{cor}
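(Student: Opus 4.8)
The plan is to reduce the $k$-block positivity of $cI-X$ to a statement about traces against Schmidt-number-$\leq k$ density operators via Lemma~\ref{prop:kPos}, and then to recognize the resulting supremum as $\big\|X\big\|_{S(k)}$ using Equation~\eqref{eq:MatMultSN} of Proposition~\ref{prop:MatMultDiffVectors}.

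First I would apply Lemma~\ref{prop:kPos} to the operator $cI-X$, which is Hermitian since $X$ is positive and $c$ is real: $cI-X$ is $k$-block positive if and only if $\Tr((cI-X)\rho)\geq 0$ for every density operator $\rho$ with $SN(\rho)\leq k$. Since every such $\rho$ satisfies $\Tr(\rho)=1$, we have $\Tr((cI-X)\rho)=c-\Tr(X\rho)$, so the condition becomes $c\geq \Tr(X\rho)$ for all $\rho$ with $SN(\rho)\leq k$, i.e. $c\geq \sup_{\rho}\{\Tr(X\rho):SN(\rho)\leq k\}$. Then, because $X$ is positive, Equation~\eqref{eq:MatMultSN} of Proposition~\ref{prop:MatMultDiffVectors} identifies this supremum exactly as $\big\|X\big\|_{S(k)}$, and combining the two equivalences yields the claim.

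There is essentially no deep obstacle here; the points that require care are the following. (i) One must confirm that the quantifier in Lemma~\ref{prop:kPos} effectively ranges over normalized states, so that the identity term contributes precisely $c$ to the trace; if one instead reads it as ranging over all positive operators of Schmidt number at most $k$, note that both inequalities $\Tr((cI-X)\rho)\geq 0$ and $c\geq \Tr(X\rho)$ are homogeneous in $\rho$ (and the set of such $\rho$ is nonempty, e.g. product states), so restricting to $\Tr(\rho)=1$ loses nothing. (ii) Positivity of $X$ is exactly what licenses the passage from the two-vector Definition~\ref{defn:operatorKNorm} to the density-operator form \eqref{eq:MatMultSN}; without it the supremum over $SN(\rho)\leq k$ of $\Tr(X\rho)$ need not equal $\big\|X\big\|_{S(k)}$.

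Finally, the degenerate cases are automatically consistent: if $c<0$, then picking any $\rho$ with $SN(\rho)\leq k$ gives $\Tr((cI-X)\rho)\leq c<0$, so $cI-X$ fails to be $k$-block positive, matching $c<0\leq\big\|X\big\|_{S(k)}$; and at $c=\big\|X\big\|_{S(k)}$ the supremum is attained (the relevant set of states is compact, as used in Proposition~\ref{prop:MatMultDiffVectors}), so the boundary case gives a genuine $k$-block positive operator.
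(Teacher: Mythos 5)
Your proposal is correct and follows essentially the same route as the paper: apply Lemma~\ref{prop:kPos} to $cI-X$, use $\Tr(\rho)=1$ to rewrite the condition as $c\geq\Tr(X\rho)$, and identify the supremum via Equation~\eqref{eq:MatMultSN} of Proposition~\ref{prop:MatMultDiffVectors}. The extra remarks on normalization, the role of positivity, and attainment of the supremum are sensible but not needed beyond what the paper's two-line argument already contains.
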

\begin{proof}
    By Proposition~\ref{prop:kPos} we know that $cI - X$ is $k$-block positive if and only if
    \[
        \Tr((cI - X)\rho) = c - \Tr(X\rho) \geq 0 \quad \forall \, \rho \in \cl{L}(\cl{H}_n) \otimes \cl{L}(\cl{H}_m) \text{ with } SN(\rho) \leq k.
    \]

    \noindent Proposition~\ref{prop:MatMultDiffVectors} tells us that this is true precisely when $c \geq \big\|X\big\|_{S(k)}$.
\end{proof}

\begin{remark}
{\rm In particular, Corollary~\ref{cor:kPosInf1} shows that the problem of computing the operator norms is equivalent to the problem of determining $k$-block positivity of a Hermitian operator. Since the $k$-positivity problem is very difficult in general, computing these norms even just for positive operators must be a very difficult problem as well. Nevertheless, we shall see in the following sections that this connection leads to a new perspective for a number of different problems in quantum information.  }
\end{remark}

\subsection{Operator Norm Inequalities}\label{sec:MatrixNormProperties}

Since computing the $k$th operator norms in general is quite difficult, it will be useful to have explicitly calculable bounds for them. The following upper bound is thus of interest because it is explicitly computable in light of Theorem~\ref{thm:vectorNorm}.
  \begin{prop}\label{prop:matrixVectorNorms}
    Let $X \in \cl{L}(\cl{H}_n) \otimes \cl{L}(\cl{H}_m)$ be normal with eigenvalues $\{\lambda_i\}$ and corresponding eigenvectors $\{ \ket{v_i} \}$. Then
    \[
        \big\|X\big\|_{S(k)} \leq \sum_i |\lambda_i|\big\|\ket{v_i}\big\|^2_{s(k)}.
    \]
  \end{prop}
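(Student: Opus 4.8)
The plan is to combine the spectral decomposition of $X$ with the defining supremum of the vector norms. Since $X$ is normal, write it in its spectral decomposition $X = \sum_i \lambda_i \ketbra{v_i}{v_i}$, where $\{\ket{v_i}\}$ is an orthonormal eigenbasis. Fix any unit vectors $\ket{v}, \ket{w}$ with $SR(\ket{v}), SR(\ket{w}) \leq k$; these are exactly the vectors over which the supremum in Definition~\ref{defn:operatorKNorm} is taken. Expanding, $\bra{w}X\ket{v} = \sum_i \lambda_i \braket{w}{v_i}\braket{v_i}{v}$, and the triangle inequality gives $\big|\bra{w}X\ket{v}\big| \leq \sum_i |\lambda_i|\, \big|\braket{w}{v_i}\big|\,\big|\braket{v_i}{v}\big|$.

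The next step is to bound each inner-product factor by a vector norm. Because $\ket{v}$ has Schmidt rank at most $k$, it is an admissible test vector in the supremum defining $\big\|\ket{v_i}\big\|_{s(k)}$, so $\big|\braket{v_i}{v}\big| \leq \big\|\ket{v_i}\big\|_{s(k)}$, and symmetrically $\big|\braket{w}{v_i}\big| \leq \big\|\ket{v_i}\big\|_{s(k)}$. Substituting yields $\big|\bra{w}X\ket{v}\big| \leq \sum_i |\lambda_i|\,\big\|\ket{v_i}\big\|^2_{s(k)}$, a bound that no longer depends on $\ket{v}$ or $\ket{w}$. Taking the supremum over all such $\ket{v}, \ket{w}$ then gives $\big\|X\big\|_{S(k)} \leq \sum_i |\lambda_i|\,\big\|\ket{v_i}\big\|^2_{s(k)}$, as claimed.

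There is no genuine obstacle here; the only points requiring a moment's care are that the definition of $\big\|\cdot\big\|_{s(k)}$ is phrased for unit vectors (but the eigenvectors $\ket{v_i}$ and the test vectors $\ket{v},\ket{w}$ are all unit vectors, so no normalization issue arises) and that the passage from $\big|\sum_i \cdots\big|$ to $\sum_i |\cdots|$ is simply the triangle inequality on a finite sum. It is worth noting as a sanity check that for $k = m$ one has $\big\|\ket{v_i}\big\|_{s(m)} = 1$ by Theorem~\ref{thm:vectorNorm}, so the inequality specializes to the familiar $\big\|X\big\| \leq \sum_i |\lambda_i| = \Tr(|X|)$.
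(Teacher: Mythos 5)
Your proof is correct and follows the same route as the paper's: spectral decomposition, triangle inequality, then bounding $|\braket{w}{v_i}|$ and $|\braket{v_i}{v}|$ by $\|\ket{v_i}\|_{s(k)}$ directly from the definition of the vector norm. Your closing sanity check (the $k=m$ case recovering $\|X\|\leq\Tr(|X|)$) is a nice addition not present in the paper.
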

    \begin{proof}
        Let $\ket{v}$ and $\ket{w}$ have $SR(\ket{v}),SR(\ket{w}) \leq k$. Then we have
        \[
            \big| \bra{w}X\ket{v} \big| = \big| \sum_i \lambda_i \braket{w}{v_i}\braket{v_i}{v} \big| \leq \sum_i |\lambda_i| |\braket{w}{v_i} ||\braket{v_i}{v} | \leq \sum_i |\lambda_i | \big\| \ket{v_i} \big\|^2_{s(k)}.
        \]
    \end{proof}

Because $\cl{L}(\cl{H}_n) \otimes \cl{L}(\cl{H}_m)$ is finite-dimensional, we know that the $k$th operator norms must be equivalent. In order to quantify this fact, we will first need a simple lemma.
\begin{lemma}\label{lem:Schmidt01}
    Let $h \leq k$ and suppose $\ket{v} \in \cl{H}_n \otimes \cl{H}_m$ is a unit vector with $SR(\ket{v}) \leq k$. Then there exist nonnegative real constants $\{ d_j \}$ and (not necessarily distinct) unit vectors $\{ \ket{v_j} \} \subseteq \cl{H}_n \otimes \cl{H}_m$ for $1 \leq j \leq k$ such that $\sum_{j=1}^k d_j^2 = h$, $SR(\ket{v_j}) \leq h$, and
    \[
      h\ket{v} = \sum_{j=1}^k d_j\ket{v_j}.
  \]
\end{lemma}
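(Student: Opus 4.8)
The plan is to produce the $d_j$ and $\ket{v_j}$ explicitly from the Schmidt decomposition of $\ket{v}$ via a cyclic ``sliding window'' construction. First I would write $\ket{v} = \sum_{i=1}^k \alpha_i \ket{u_i} \otimes \ket{w_i}$ in Schmidt form, allowing some $\alpha_i$ to equal $0$ so that there are exactly $k$ (formal) terms, with $\{\ket{u_i}\}$, $\{\ket{w_i}\}$ orthonormal and $\sum_{i=1}^k \alpha_i^2 = 1$.

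Next, for each $j \in \{1,\dots,k\}$ I would set $W_j := \{\, j, j+1, \dots, j+h-1 \,\}$ with the indices read modulo $k$ (with representatives in $\{1,\dots,k\}$), and define the unnormalized vector
\[
  \ket{\tilde v_j} := \sum_{i \in W_j} \alpha_i \ket{u_i} \otimes \ket{w_i}.
\]
Each $\ket{\tilde v_j}$ is a sub-sum of an orthonormal Schmidt expansion involving at most $h$ of the terms, so $SR(\ket{\tilde v_j}) \leq h$ for free. The single fact that makes the whole thing work is a counting identity: every index $i$ lies in exactly $h$ of the windows $W_1, \dots, W_k$ (namely $W_i, W_{i-1}, \dots, W_{i-h+1}$, cyclically). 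Summing over $j$ and using this identity gives $\sum_{j=1}^k \ket{\tilde v_j} = h \ket{v}$, while orthonormality of the vectors $\ket{u_i} \otimes \ket{w_i}$ gives $\big\| \ket{\tilde v_j} \big\|^2 = \sum_{i \in W_j} \alpha_i^2$, so that
\[
  \sum_{j=1}^k \big\| \ket{\tilde v_j} \big\|^2 = h \sum_{i=1}^k \alpha_i^2 = h.
\]

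Finally I would set $d_j := \big\| \ket{\tilde v_j} \big\|$, and take $\ket{v_j} := d_j^{-1} \ket{\tilde v_j}$ whenever $d_j > 0$, while whenever $d_j = 0$ (which can only occur if $SR(\ket{v}) < k$) I let $\ket{v_j}$ be any unit vector with $SR(\ket{v_j}) \leq h$, which exists since $h \geq 1$. In both cases $d_j \ket{v_j} = \ket{\tilde v_j}$, so the three required conclusions $\sum_j d_j^2 = h$, $SR(\ket{v_j}) \leq h$, and $h \ket{v} = \sum_j d_j \ket{v_j}$ all fall out of the two displayed identities above.

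I do not expect a genuine obstacle here: the entire content is choosing the right way to spread the Schmidt weight of $\ket{v}$ across $k$ pieces of Schmidt rank at most $h$, and the cyclic window is essentially forced, since the uniform overlap count $h$ is exactly what turns the normalization $\sum_j d_j^2 = h$ into an \emph{equality} rather than a mere inequality. The only point requiring a little care is the degenerate window with all-zero coefficients, handled by the $h \geq 1$ escape clause; note that a naive non-cyclic grouping into consecutive blocks of size $h$ would fail to close up and would not yield the clean identities.
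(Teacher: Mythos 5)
Your proof is correct and is essentially the paper's own argument: the paper writes $h\ket{v}$ as $h$ copies of the Schmidt sum and asserts the $hk$ terms can be ``rearranged'' into $k$ groups of $h$ distinct orthonormal product terms each, and your cyclic sliding-window $W_j$ is precisely an explicit such rearrangement (and handles the degenerate zero-coefficient case, which the paper glosses over). No gaps.
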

\begin{proof}
  We can write $\ket{v}$ via the Schmidt Decomposition as $\ket{v} = \sum_{j=1}^k c_j \ket{a_j} \otimes \ket{b_j}$ with $\sum_{j=1}^k|c_j|^2 = 1$ and $\{ \ket{a_j} \}$, $\{ \ket{b_j} \}$ orthonormal sets. Thus
  \[
    h\ket{v} = \sum_{i=1}^h\sum_{j=1}^k c_j \ket{a_j} \otimes \ket{b_j}.
  \]

  Because $h \leq k$, we can rearrange the summations in such a way that we sum over $k$ sets of orthonormal vectors, with $h$ vectors in each set. We thus have $h\ket{v} = \sum_{j=1}^k d_j\ket{v_j}$ for some unit vectors $\ket{v_j}$ with $SR(\ket{v_j}) \leq h$ and constants $d_j$ satisfying $\sum_{j=1}^k d_j^2 = h$.
\end{proof}

\begin{thm}\label{thm:MatrixEquiv01}
    Let $X \in \cl{L}(\cl{H}_n) \otimes \cl{L}(\cl{H}_m)$ and suppose $h \leq k$. Then
    \begin{align*}
      \big\| X \big\|_{S(h)} & \leq \big\| X \big\|_{S(k)} \leq \frac{k}{h} \big\| X \big\|_{S(h)}.
    \end{align*}
\end{thm}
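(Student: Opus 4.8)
The plan is to treat the two inequalities separately. The left inequality $\big\|X\big\|_{S(h)} \le \big\|X\big\|_{S(k)}$ is immediate from Definition~\ref{defn:operatorKNorm}: since $h \le k$, every pair of vectors $\ket{v},\ket{w}$ with $SR(\ket{v}),SR(\ket{w}) \le h$ also satisfies $SR(\ket{v}),SR(\ket{w}) \le k$, so the supremum defining $\big\|X\big\|_{S(k)}$ is taken over a set containing the one defining $\big\|X\big\|_{S(h)}$.

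For the right inequality, fix unit vectors $\ket{v},\ket{w}$ with $SR(\ket{v}),SR(\ket{w}) \le k$; the goal is to bound $\big|\bra{w}X\ket{v}\big|$ by $\tfrac{k}{h}\big\|X\big\|_{S(h)}$. I would apply Lemma~\ref{lem:Schmidt01} twice, once to $\ket{v}$ and once to $\ket{w}$, to obtain nonnegative constants $\{d_j\}_{j=1}^{k}$, $\{e_i\}_{i=1}^{k}$ and unit vectors $\{\ket{v_j}\}$, $\{\ket{w_i}\}$ with $SR(\ket{v_j}),SR(\ket{w_i}) \le h$, with $\sum_{j} d_j^2 = \sum_{i} e_i^2 = h$, and with $h\ket{v} = \sum_{j} d_j\ket{v_j}$ and $h\ket{w} = \sum_{i} e_i\ket{w_i}$. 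Expanding $h^2\bra{w}X\ket{v}$ bilinearly and applying the triangle inequality, each term $\big|\bra{w_i}X\ket{v_j}\big|$ is at most $\big\|X\big\|_{S(h)}$ because $SR(\ket{v_j}),SR(\ket{w_i}) \le h$, giving $h^2\big|\bra{w}X\ket{v}\big| \le \big\|X\big\|_{S(h)}\big(\sum_{i} e_i\big)\big(\sum_{j} d_j\big)$.

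It then remains to estimate the two coefficient sums. By the Cauchy--Schwarz inequality applied to the $k$-tuple $(d_1,\ldots,d_k)$ against the all-ones vector, $\sum_{j} d_j \le \sqrt{k}\,\big(\sum_{j} d_j^2\big)^{1/2} = \sqrt{kh}$, and likewise $\sum_{i} e_i \le \sqrt{kh}$. Substituting yields $h^2\big|\bra{w}X\ket{v}\big| \le kh\,\big\|X\big\|_{S(h)}$, i.e.\ $\big|\bra{w}X\ket{v}\big| \le \tfrac{k}{h}\big\|X\big\|_{S(h)}$, and taking the supremum over all admissible $\ket{v},\ket{w}$ gives $\big\|X\big\|_{S(k)} \le \tfrac{k}{h}\big\|X\big\|_{S(h)}$. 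I do not anticipate a genuine obstacle here; the only point needing care is that Lemma~\ref{lem:Schmidt01} has to be invoked on both the bra and the ket side and the resulting double sum factored before the Cauchy--Schwarz estimate, after which everything is routine.
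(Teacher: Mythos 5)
Your argument is correct and follows essentially the same route as the paper: the left inequality by inclusion of the defining suprema, and the right inequality by applying Lemma~\ref{lem:Schmidt01} to both $\ket{v}$ and $\ket{w}$, expanding bilinearly, bounding each cross term by $\big\|X\big\|_{S(h)}$, and using Cauchy--Schwarz to bound the coefficient sums by $\sqrt{kh}$. Your write-up is, if anything, slightly more explicit than the paper's about how the two Cauchy--Schwarz applications produce the factor $kh$.
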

\begin{proof}
  The left inequality is trivial by the definition of the operator norms. To see the right inequality, suppose $\ket{v}$ and $\ket{w}$ have $SR(\ket{v}),SR(\ket{w}) \leq k$. Use Lemma~\ref{lem:Schmidt01} to write $h\ket{v} = \sum_{j=1}^k d_j\ket{v_j}$ and $h\ket{w} = \sum_{j=1}^k f_j\ket{w_j}$ so that
  \begin{align*}
    h^2\big| \bra{w}X\ket{v} \big| = \big| \sum_{i,j=1}^k{f_i d_j \bra{w_i} X \ket{v_j}} \big| \leq \Big(\sum_{i=1}^k{f_i}\Big)\Big(\sum_{i=1}^k{d_i}\Big) \big\| X \big\|_{S(h)} \leq kh \big\| X \big\|_{S(h)},
  \end{align*}

  \noindent where the rightmost inequality follows from two applications of the Cauchy-Schwarz inequality. The result follows by dividing through by $h^2$.
\end{proof}

To see that the inequalities of Theorem~\ref{thm:MatrixEquiv01} are tight, simply recall Example~\ref{exam:MatrixNormChoi}. Also observe that a straightforward consequence of this result is the inequality $\big\| X \big\|_{S(k)} \geq \frac{k}{m} \big\| X \big\|$ for all $k \leq m$. We now derive lower bounds that are much better in many situations.

\begin{prop}\label{prop:lowerBoundEig}
    Let $X = X^* \in \cl{L}(\cl{H}_n) \otimes \cl{L}(\cl{H}_m)$ have eigenvalues $\lambda_1 \leq \lambda_2 \leq \cdots \leq \lambda_{mn}$. Then for any $r \geq k$,
    \[
      \big\| X \big\|_{S(k)} \geq \frac{k\lambda_{mn - (n-r)(m-r)}}{r}.
    \]

    \noindent Furthermore, there exists an $X \in (\cl{L}(\cl{H}_n) \otimes \cl{L}(\cl{H}_m))^+$ such that $\big\| X \big\|_{S(k)} < \lambda_{nm - (n - k)(m - k) + 1}$.
\end{prop}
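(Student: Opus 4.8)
The plan is to handle the two claims separately. For the lower bound I would combine the Cubitt--Montanaro--Winter subspace theorem (Theorem~\ref{thm:CMW08}) with the norm-equivalence estimate of Theorem~\ref{thm:MatrixEquiv01}: first produce a vector of Schmidt rank at most $r$ sitting inside the span of the top eigenvectors of $X$, then transfer the resulting bound from the $S(r)$-norm down to the $S(k)$-norm. For the existence claim I would exhibit a concrete witness, namely the orthogonal projection onto a maximal subspace all of whose vectors have Schmidt rank at least $k+1$.

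For the lower bound, fix an orthonormal eigenbasis $\{\ket{v_i}\}_{i=1}^{mn}$ of $X$ with $X\ket{v_i}=\lambda_i\ket{v_i}$, put $t:=(n-r)(m-r)+1$, and let $\cl{T}:=\spn\{\ket{v_{mn-t+1}},\dots,\ket{v_{mn}}\}$ be the span of the $t$ eigenvectors attached to the largest eigenvalues. Since $\dim\cl{T}=t>(n-r)(m-r)=\big(n-(r+1)+1\big)\big(m-(r+1)+1\big)$, the contrapositive of Theorem~\ref{thm:CMW08} (with $k$ replaced by $r+1$) yields a unit vector $\ket{v}\in\cl{T}$ with $SR(\ket{v})\le r$. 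Writing $\ket{v}=\sum_{i\ge mn-t+1}c_i\ket{v_i}$ with $\sum_i|c_i|^2=1$ gives $\bra{v}X\ket{v}=\sum_i\lambda_i|c_i|^2\ge\lambda_{mn-t+1}=\lambda_{mn-(n-r)(m-r)}$. Since $SR(\ket{v})\le r$, the definition of $\|\cdot\|_{S(r)}$ gives $\|X\|_{S(r)}\ge\bra{v}X\ket{v}$, and Theorem~\ref{thm:MatrixEquiv01} (applied with the smaller index $k\le r$) gives $\|X\|_{S(k)}\ge\tfrac{k}{r}\|X\|_{S(r)}$. Chaining these three inequalities delivers $\|X\|_{S(k)}\ge\tfrac{k}{r}\lambda_{mn-(n-r)(m-r)}$.

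For the existence claim, assume $k<m$, so that $(n-k)(m-k)\ge1$ (when $k=m$ the stated inequality is vacuous, since $\lambda_{nm+1}$ does not occur). By Theorem~\ref{thm:CMW08} there is a subspace $\cl{S}\subseteq\cl{H}_n\otimes\cl{H}_m$ with $\dim\cl{S}=(n-k)(m-k)$ and $SR(\ket{w})\ge k+1$ for every $\ket{w}\in\cl{S}$; let $X:=P_{\cl{S}}\ge0$ be the orthogonal projection onto $\cl{S}$. Its eigenvalues are $1$ with multiplicity $(n-k)(m-k)$ and $0$ otherwise, whence $\lambda_{nm-(n-k)(m-k)+1}=1$. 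By Proposition~\ref{prop:MatMultDiffVectors}, $\|X\|_{S(k)}=\sup\{\bra{v}P_{\cl{S}}\ket{v}:SR(\ket{v})\le k\}=\sup\{\|P_{\cl{S}}\ket{v}\|^2:SR(\ket{v})\le k\}$. The set of unit vectors with Schmidt rank at most $k$ is compact and $\ket{v}\mapsto\|P_{\cl{S}}\ket{v}\|^2$ is continuous, so this supremum is attained, say at $\ket{v_0}$; were it equal to $1$ we would have $\ket{v_0}\in\cl{S}$ and $SR(\ket{v_0})\le k$, contradicting the defining property of $\cl{S}$. Hence $\|X\|_{S(k)}<1=\lambda_{nm-(n-k)(m-k)+1}$, as desired.

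The computations are essentially bookkeeping once the right objects are in place; the two points that need care are (i) choosing the correct witness for the second claim --- the maximally entangled projection $E$ of Example~\ref{exam:MatrixNormChoi} only does the job when $k=m-1$, so one really wants the Cubitt--Montanaro--Winter subspace, which works for every $k<m$ --- and (ii) the compactness argument that upgrades ``the supremum is not $1$'' to the strict inequality $\|X\|_{S(k)}<1$.
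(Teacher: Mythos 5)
Your proof is correct and follows essentially the same route as the paper: the lower bound comes from applying Theorem~\ref{thm:CMW08} to the span of the top $(n-r)(m-r)+1$ eigenvectors and then passing from $\|\cdot\|_{S(r)}$ to $\|\cdot\|_{S(k)}$ via Theorem~\ref{thm:MatrixEquiv01}, exactly as in the paper. For the existence claim the paper only sketches an operator whose top $(n-k)(m-k)$-dimensional eigenspace avoids Schmidt-rank-$\le k$ vectors (using the tightness of Theorem~\ref{thm:CMW08}); your choice of the orthogonal projection onto such a subspace, together with the compactness argument yielding the strict inequality, is a clean and fully rigorous instantiation of that same idea.
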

\begin{proof}
    Let $\cl{V}$ be the span of the eigenvectors $\ket{v_{nm - (n - r)(m - r)}}, \ket{v_{nm - (n - r)(m - r) + 1}}, \ldots, \ket{v_{nm}}$ corresponding to $\lambda_{nm - (n - r)(m - r)}, \lambda_{nm - (n - r)(m - r) + 1}, \ldots, \lambda_{mn}$. Then because ${\rm dim}(\cl{V}) = (n - r)(m - r) + 1$, by Theorem~\ref{thm:CMW08}, we know that there must exist a vector $\ket{v} \in \cl{V}$ with $SR(\ket{v}) \leq r$. It follows that
    \[
        \big\|X \big\|_{S(r)} \geq \big| \bra{v}X\ket{v} \big| \geq \sum_{i=nm - (n - r)(m - r)}^{mn} \lambda_i |\braket{v_i}{v}|^2 \geq \lambda_{nm - (n - r)(m - r)}.
    \]

    \noindent Using Theorem~\ref{thm:MatrixEquiv01} then shows that if $k \leq r$,
  \begin{align*}
    \big\| X \big\|_{S(k)} \geq \frac{k}{r}\big\| X \big\|_{S(r)} \geq \frac{k\lambda_{mn - (n-r)(m-r)}}{r}.
  \end{align*}

  To see the final claim, note that the dimension given by Theorem~\ref{thm:CMW08} is tight, so we can construct a positive operator $X$ with distinct eigenvalues such that the span of the eigenvectors corresponding to its $(n - k)(m - k)$ largest eigenvalues does not contain any states $\ket{w}$ with $SR(\ket{w}) \leq k$. It follows that $\bra{v}X\ket{v} < \lambda_{nm - (n - k)(m - k) + 1}$ for all $\ket{v}$ with $SR(\ket{v}) \leq k$.
\end{proof}

Now notice that if $P = P^* = P^2 \in (\cl{L}(\cl{H}_n) \otimes \cl{L}(\cl{H}_m))^+$ is an orthogonal projection, then by Theorem~\ref{thm:MatrixEquiv01} we have that $\frac{k}{m} \leq \big\| P \big\|_{S(k)} \leq 1$. The left inequality was seen to be tight by a rank-$1$ projection in Example~\ref{exam:MatrixNormChoi}, and it is not difficult to construct projection operators of any rank that have $\big\| P \big\|_{S(k)} = 1$. However, the following two results show that we can improve the lower bound if we take the rank of the projection, ${\rm rank}(P)$ into account.

\begin{thm}\label{thm:mainProjRes}
    Let $P = P^* = P^2 \in (\cl{L}(\cl{H}_n) \otimes \cl{L}(\cl{H}_m))^{+}$ be an orthogonal projection. Then
    \begin{align}\label{eq:projIneq1}
      \big\| P \big\|_{S(k)} & \geq \min\Big\{1,\frac{k}{\big\lceil \frac{1}{2}\big( n + m - \sqrt{(n-m)^2 + 4{\rm rank}(P) - 4} \big) \big\rceil}\Big\} \text{ and } \\ \label{eq:projIneq2}
      \big\|P\big\|_{S(k)} & \geq \frac{(k-1)mn + (m-k){\rm rank}(P)}{mn(m-1)}.
    \end{align}
\end{thm}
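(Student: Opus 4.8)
The plan is to prove the two lower bounds by exploiting, respectively, the Cubitt–Montanaro–Winter dimension bound (Theorem~\ref{thm:CMW08}) together with Proposition~\ref{prop:lowerBoundEig}, and a direct averaging argument over Schmidt-rank-one vectors.

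For inequality~\eqref{eq:projIneq1}, first I would observe that since $P$ is an orthogonal projection its eigenvalues are $0$ and $1$, so the $\lambda$-indexing of Proposition~\ref{prop:lowerBoundEig} simplifies enormously: the range of $P$ is exactly the span of the top ${\rm rank}(P)$ eigenvectors. The idea is to find the smallest $r$ for which Theorem~\ref{thm:CMW08} guarantees that a subspace of dimension ${\rm rank}(P)$ must contain a vector of Schmidt rank at most $r$; that happens as soon as ${\rm rank}(P) > (n-r)(m-r)$, i.e.\ as soon as the ``all vectors have Schmidt rank $\geq r+1$'' bound $(n-r)(m-r)$ drops below ${\rm rank}(P)$. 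Solving $(n-r)(m-r) = {\rm rank}(P) - 1$ for $r$ as a quadratic in $r$ gives $r = \frac{1}{2}\big(n+m - \sqrt{(n-m)^2 + 4{\rm rank}(P) - 4}\big)$; taking the ceiling gives the least integer $r_0$ with $(n-r_0)(m-r_0) \leq {\rm rank}(P) - 1$. For that $r_0$ the range of $P$ contains a vector $\ket{v}$ with $SR(\ket{v}) \leq r_0$, whence $\|P\|_{S(r_0)} \geq \bra{v}P\ket{v} = 1$. If $k \geq r_0$ this already gives $\|P\|_{S(k)} \geq 1$; if $k < r_0$, Theorem~\ref{thm:MatrixEquiv01} yields $\|P\|_{S(k)} \geq \frac{k}{r_0}\|P\|_{S(r_0)} \geq \frac{k}{r_0}$. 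Combining the two cases gives the stated minimum. I would need to be slightly careful that the ceiling expression indeed equals the least such integer $r_0$ (checking the boundary case when the square root is an integer), and that $r_0 \leq m$ so that the norm $\|\cdot\|_{S(r_0)}$ is even defined — but since ${\rm rank}(P) \leq nm$ one has $(n-r)(m-r) \leq nm - 1 < nm$ already fails only trivially, and the relevant $r_0$ is always at most $m$.

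For inequality~\eqref{eq:projIneq2}, the approach is an averaging argument. By Proposition~\ref{prop:MatMultDiffVectors}, $\|P\|_{S(k)} = \sup\{\bra{v}P\ket{v} : SR(\ket{v}) \leq k\}$, so it suffices to exhibit a distribution over Schmidt-rank-$\leq k$ unit vectors whose average value of $\bra{v}P\ket{v}$ is at least the claimed quantity; the supremum then dominates the average. A natural choice: pick a uniformly random (Haar) pair of a $k$-dimensional subspace structure, or more concretely build $\ket{v}$ from a random pair of orthonormal $k$-tuples on each side with fixed Schmidt coefficients, and compute $\mathbb{E}\,\bra{v}P\ket{v} = \Tr(P\,\mathbb{E}\ketbra{v}{v})$. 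The expectation $\mathbb{E}\ketbra{v}{v}$ is, by symmetry, a fixed operator that is a combination of the identity and the swap/maximally-mixed pieces, so $\Tr(P\,\mathbb{E}\ketbra{v}{v})$ becomes a linear function of ${\rm rank}(P) = \Tr(P)$. Matching the coefficients of this linear function against the endpoints — at ${\rm rank}(P) = mn$ we need the bound to give $1$ (it does: plug in and get $\frac{(k-1)mn + (m-k)mn}{mn(m-1)} = \frac{(m-1)mn}{mn(m-1)} = 1$), and at smaller rank it degrades linearly — pins down exactly the formula $\frac{(k-1)mn + (m-k){\rm rank}(P)}{mn(m-1)}$. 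So the real content is computing $\mathbb{E}\ketbra{v}{v}$ for the randomized Schmidt-rank-$k$ ensemble and reading off the two coefficients.

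The main obstacle I anticipate is the second part: choosing the right random ensemble of Schmidt-rank-$\leq k$ vectors and evaluating $\mathbb{E}\ketbra{v}{v}$ cleanly, since the ``Schmidt rank exactly $k$ with arbitrary coefficients'' family is not a single orbit of a group and one has to either fix the Schmidt coefficients (e.g.\ take the flat ones, giving a maximally-entangled state on a random $k\times k$ block) or integrate over them as well. Fixing the flat Schmidt spectrum on a uniformly random pair of $k$-dimensional coordinate subspaces (or their Haar-random rotations) is the cleanest route, and then $\mathbb{E}\ketbra{v}{v}$ is a convex combination determined entirely by $n$, $m$, $k$; the coefficient bookkeeping is routine linear algebra once that expectation is in hand. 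The first part is comparatively safe — the only delicate point is the arithmetic of inverting the quadratic and handling the ceiling, plus confirming the $k \geq r_0$ versus $k < r_0$ split reproduces the $\min\{1,\cdot\}$.
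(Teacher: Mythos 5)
Your argument for Inequality~\eqref{eq:projIneq1} is essentially the paper's: invert $(n-r)(m-r) \leq \mathrm{rank}(P)-1$ to find the least $r_0$ for which Theorem~\ref{thm:CMW08} forces a Schmidt-rank-$\leq r_0$ vector into the range of $P$ (equivalently, apply Proposition~\ref{prop:lowerBoundEig} with $\lambda_{mn-(n-r)(m-r)}=1$), conclude $\|P\|_{S(r_0)}=1$, and scale down to $k< r_0$ via Theorem~\ref{thm:MatrixEquiv01}. That part is fine, including your observation that $r_0\leq m$ always.

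Inequality~\eqref{eq:projIneq2} is where there is a genuine gap, and it is not just the bookkeeping you flag as "routine": the averaging strategy cannot produce this bound for $k\geq 2$. Any ensemble argument of the form $\|P\|_{S(k)} \geq \Tr(P\,\mathbb{E}\ketbra{v}{v}) =: \Tr(P\rho_0)$, with $\rho_0$ a fixed state depending only on $n,m,k$, yields a bound valid for \emph{all} projections of a given rank only through the sum of the $\mathrm{rank}(P)$ smallest eigenvalues of $\rho_0$; since those eigenvalues average to $\frac{1}{mn}$, that worst-case sum is at most $\frac{\mathrm{rank}(P)}{mn}$. But $\frac{(k-1)mn+(m-k)\mathrm{rank}(P)}{mn(m-1)} - \frac{\mathrm{rank}(P)}{mn} = \frac{(k-1)(mn-\mathrm{rank}(P))}{mn(m-1)} > 0$ whenever $k\geq 2$ and $\mathrm{rank}(P)<mn$, so the target strictly exceeds anything a single averaging state can certify. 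Concretely, with independent Haar tuples on the two factors the cross terms $\mathbb{E}[\ketbra{a_i}{a_j}\otimes\ketbra{b_i}{b_j}]$ vanish by phase symmetry and you get $\rho_0 = \frac{I}{mn}$, i.e.\ only the $k=1$ bound; correlating the two sides (e.g.\ a twirled rank-$k$ maximally entangled state) makes $\Tr(P\rho_0)$ depend on quantities like $\Tr(PE)$ that are not functions of $\mathrm{rank}(P)$, and bounding those below by $0$ gives something \emph{weaker} than $\frac{\mathrm{rank}(P)}{mn}$.

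What the paper actually does is a two-step argument. The $k=1$ case \emph{is} an averaging argument, but a discrete one over the fixed product basis: summing $(\bra{e_j}\otimes\bra{f_l})P(\ket{e_j}\otimes\ket{f_l})$ over all $mn$ pairs gives $\Tr(P)$, so some product basis vector witnesses $\|P\|_{S(1)}\geq \frac{\mathrm{rank}(P)}{mn}$. The passage to general $k$ is not another average but a bootstrap on Schmidt coefficients: $\|P\|_{S(1)} = \sup\{\alpha_1^2 : \ket{w}\in P\cl{H}\}$ where $\alpha_1$ is the largest Schmidt coefficient of $\ket{w}$; taking $\ket{w}\in P\cl{H}$ attaining this supremum, the remaining mass $\sum_{i=2}^m\alpha_i^2 = 1-\|P\|_{S(1)}$ is spread over $m-1$ decreasing terms, so $\sum_{i=2}^k\alpha_i^2 \geq \frac{k-1}{m-1}(1-\|P\|_{S(1)})$, and $\|P\|_{S(k)}\geq \sum_{i=1}^k\alpha_i^2 \geq \|P\|_{S(1)} + \frac{(k-1)(1-\|P\|_{S(1)})}{m-1}$. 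Since the coefficient $\frac{m-k}{m-1}$ of $\|P\|_{S(1)}$ after rearranging is nonnegative, substituting $\|P\|_{S(1)}\geq\frac{\mathrm{rank}(P)}{mn}$ gives exactly \eqref{eq:projIneq2}. You would need to replace your second step with an argument of this kind (exploiting that a vector in the range of a projection contributes its \emph{full} top-$k$ Schmidt mass, not just its overlap with one product vector) rather than optimizing the choice of random ensemble.
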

\begin{proof}
  To prove Inequality~\eqref{eq:projIneq1}, notice that Proposition~\ref{prop:lowerBoundEig} implies that $\big\| P \big\|_{S(r)} = 1$ whenever ${\rm rank}(P) \geq (n-r)(m-r) + 1$. Solving this inequality for $r$ gives
  \[
    r \geq \frac{1}{2}\Big( n + m - \sqrt{(n-m)^2 + 4{\rm rank}(P) - 4} \Big).
  \]

  Thus, choose $r = \Big\lceil \frac{1}{2}\Big( n + m - \sqrt{(n-m)^2 + 4{\rm rank}(P) - 4} \Big) \Big\rceil$ and $k \leq r$. Then using Proposition~\ref{prop:lowerBoundEig} again shows
  \[
      \big\| P \big\|_{S(k)} \geq \frac{k}{\big\lceil \frac{1}{2}\big( n + m - \sqrt{(n-m)^2 + 4{\rm rank}(P) - 4} \big) \big\rceil}.
    \]

    To show Inequality~\eqref{eq:projIneq2} holds, we first prove the result in the $k = 1$ case. Define $p := {\rm rank}(P)$. Choose orthonormal bases $\{ \ket{e_j} \}$ and $\{ \ket{f_l} \}$ of $\cl{L}(\cl{H}_n)$ and $\cl{L}(\cl{H}_m)$, respectively. Then choose $p$ orthonormal vectors $\ket{v_i}$ in the range of $P$ and observe that we can write them in the form
    \[
        \ket{v_i} = \sum_{j=1}^n\sum_{l=1}^m c_{ijl}\ket{e_j}\otimes\ket{f_l},
    \]

    \noindent where $\{c_{ijl}\} \in \bb{C}$ is a family of constants such that
    \begin{align}\label{eq:sumeq}
        \sum_{j=1}^n\sum_{l=1}^m |c_{ijl}|^2 = 1 \quad \forall \, i = 1, 2, \ldots, p.
    \end{align}

    \noindent It follows that there exists some fixed $j$ and $l$ such that
    \[
        \sum_{i=1}^p|c_{ijl}|^2 \geq \frac{p}{mn},
    \]

    \noindent since otherwise Equation~\eqref{eq:sumeq} would be violated. The $k = 1$ case follows by noting that, for this specific $j$ and $l$,
    \begin{align}\label{eq:s1LB}
        \big\|P\big\|_{S(1)} \geq (\bra{e_j} \otimes \bra{f_l})P(\ket{e_j} \otimes \ket{f_l}) = \sum_{i=1}^p\big|\bra{v_i}(\ket{e_j} \otimes \ket{f_l})\big|^2 = \sum_{i=1}^p|c_{ijl}|^2 \geq \frac{p}{mn}.
    \end{align}

    Now note that Theorem~\ref{thm:vectorNorm} says that for any $\ket{v}$ with $SR(\ket{v}) = 1$ and any $\ket{w} \in P\cl{H}$, $\big\| P \big\|_{S(1)} \geq \big| \braket{v}{w} \big|^2 \geq \alpha_1^2$, where $\alpha_1$ is the largest Schmidt coefficient of $\ket{w}$. On the other hand, it is clear that for any $\ket{w} \in P\cl{H}$, there exists a $\ket{v}$ with $SR(\ket{v}) = 1$ such that $\big| \braket{v}{w} \big|^2 = \alpha_1^2$. It follows that
    \begin{align}\label{eq:S1NormProj}
        \big\|P\big\|_{S(1)} = \sup_{\ket{w} \in P\cl{H}}\big\{ \alpha_1^2 : \alpha_1 \text{ is the largest Schmidt coefficient of } \ket{w} \big\}.
    \end{align}

    Now let $\ket{w} \in P\cl{H}$ have Schmidt coefficients $\{ \alpha_i \}$ such that $\alpha_1 = \sqrt{\big\| P \big\|_{S(1)}}$. Then using the facts that $\sum_{i=1}^m\alpha_i^2 = 1$ and $\alpha_i \geq \alpha_j$ for $i \leq j$, it follows that $\sum_{i=2}^m \alpha_i^2 = 1 - \big\| P \big\|_{S(1)}$ and so $\sum_{i=2}^k \alpha_i^2 \geq \frac{k-1}{m-1}(1 - \big\| P \big\|_{S(1)})$. Thus
    \[
        \big\|P\big\|_{S(k)} \geq \sum_{i=1}^k \alpha_i^2 = \big\|P\big\|_{S(1)} + \sum_{i=2}^k \alpha_i^2 \geq \big\|P\big\|_{S(1)} + \frac{(k-1)(1 - \big\|P\big\|_{S(1)})}{m-1}.
    \]

    \noindent The result follows by rearranging and using Inequality~\eqref{eq:s1LB}.
\end{proof}

Additionally, the same method as was used in the second half of the proof of Inequality~\eqref{eq:projIneq2} can be used to show the following improvement of the left inequality of Theorem~\ref{thm:MatrixEquiv01} in the case of projections.
\begin{cor}
    Let $P = P^* = P^2 \in (\cl{L}(\cl{H}_n) \otimes \cl{L}(\cl{H}_m))^+$ be an orthogonal projection and let $h \leq k$. Then
    \[
        \big\|P\big\|_{S(k)} \geq \left(1 - \frac{k - h}{m-1}\right)\big\|P\big\|_{S(h)} + \frac{k-h}{m-1}.
    \]
\end{cor}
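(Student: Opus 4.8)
The plan is to imitate the second half of the proof of Inequality~\eqref{eq:projIneq2}, the only new ingredient being the generalization of Equation~\eqref{eq:S1NormProj} from $h = 1$ to arbitrary $h$. So the first step is to prove that, for an orthogonal projection $P \neq 0$ and any $h$,
\[
  \big\|P\big\|_{S(h)} = \sup_{\ket{w}\in P\cl{H}}\big\{ \big\|\ket{w}\big\|_{s(h)}^2 : \big\|\ket{w}\big\| = 1 \big\}.
\]
The inequality ``$\geq$'' is immediate: for a unit $\ket{w}\in P\cl{H}$ one has $\ketbra{w}{w}\leq P$, and $\big\|\cdot\big\|_{S(h)}$ is monotone on positive operators (clear from Proposition~\ref{prop:MatMultDiffVectors}), so $\big\|\ket{w}\big\|_{s(h)}^2 = \big\|\ketbra{w}{w}\big\|_{S(h)} \leq \big\|P\big\|_{S(h)}$ by Proposition~\ref{prop:rankOneNorm}. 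For ``$\leq$'', take $\ket{v}$ with $SR(\ket{v})\leq h$ and $P\ket{v}\neq 0$ and set $\ket{w} := P\ket{v}/\big\|P\ket{v}\big\| \in P\cl{H}$; then $\braket{v}{w} = \bra{v}P\ket{v}/\big\|P\ket{v}\big\| = \big\|P\ket{v}\big\|$, so the definition of $\big\|\cdot\big\|_{s(h)}$ gives $\big\|\ket{w}\big\|_{s(h)}^2 \geq \big|\braket{v}{w}\big|^2 = \bra{v}P\ket{v}$, and taking the supremum over such $\ket{v}$ and applying Proposition~\ref{prop:MatMultDiffVectors} yields the claim. (This is precisely how Equation~\eqref{eq:S1NormProj} was obtained in the case $h = 1$.)

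With this identity available, assume $h < k$ (the case $h = k$ being trivial). The unit vectors of $P\cl{H}$ form a compact set and $\ket{w}\mapsto \big\|\ket{w}\big\|_{s(h)}^2$ is continuous, so the supremum is attained by some unit $\ket{w}\in P\cl{H}$; let $\alpha_1 \geq \alpha_2 \geq \cdots \geq \alpha_m \geq 0$ be its Schmidt coefficients, so by Theorem~\ref{thm:vectorNorm} we have $\sum_{i=1}^h\alpha_i^2 = \big\|P\big\|_{S(h)}$, and therefore $\sum_{i=h+1}^m\alpha_i^2 = 1 - \big\|P\big\|_{S(h)}$. Since the $m - h$ numbers $\alpha_{h+1}^2 \geq \cdots \geq \alpha_m^2 \geq 0$ are nonincreasing, the first $k - h$ of them account for at least the fraction $\tfrac{k-h}{m-h}$ of their total, so $\sum_{i=h+1}^k\alpha_i^2 \geq \tfrac{k-h}{m-h}\big(1 - \big\|P\big\|_{S(h)}\big)$. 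Combining this with $\big\|P\big\|_{S(k)} \geq \big\|\ket{w}\big\|_{s(k)}^2 = \sum_{i=1}^k\alpha_i^2$ (the ``$\geq$'' part of the identity above with $h$ replaced by $k$, together with Theorem~\ref{thm:vectorNorm}) gives
\[
  \big\|P\big\|_{S(k)} \geq \big\|P\big\|_{S(h)} + \frac{k-h}{m-h}\big(1 - \big\|P\big\|_{S(h)}\big).
\]

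To reach the stated bound, observe that $\big\|P\big\|_{S(h)} \leq 1$ makes $t\mapsto \big\|P\big\|_{S(h)} + t\big(1 - \big\|P\big\|_{S(h)}\big)$ nondecreasing in $t \geq 0$, while $\tfrac{k-h}{m-h} \geq \tfrac{k-h}{m-1}$ because $h \geq 1$; hence we may replace $m - h$ by $m - 1$, and rearranging produces $\big\|P\big\|_{S(k)} \geq \big(1 - \tfrac{k-h}{m-1}\big)\big\|P\big\|_{S(h)} + \tfrac{k-h}{m-1}$. The main obstacle is the generalized identity of the first paragraph; once it is in place, everything else is elementary, and in fact the argument yields the slightly stronger inequality with $m - h$ in place of $m - 1$.
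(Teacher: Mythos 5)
Your proof is correct and is exactly the argument the paper intends: it generalizes Equation~\eqref{eq:S1NormProj} to arbitrary $h$ and then reruns the counting argument from the second half of the proof of Inequality~\eqref{eq:projIneq2}, which is precisely the method the paper points to for this corollary. As a bonus, your version shows the inequality actually holds with $m-h$ in place of $m-1$, which is slightly sharper than the stated bound.
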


\begin{remark}
{\rm Theorem~\ref{thm:mainProjRes} is particularly important because we will see that several important problems in quantum information theory could be answered if we were able to compute, or bound tightly, the $k$th operator norms of projections. Inequality~\eqref{eq:projIneq1} provides the best bound we have when ${\rm rank}(P)$ is small or large (e.g., ${\rm rank}(P) \leq m$ or ${\rm rank}(P) \geq (n-1)(m-1)$), but Inequality~\eqref{eq:projIneq2} is much tighter for moderate-rank projections (e.g., when ${\rm rank}(P) \approx \frac{mn}{2}$).

The two special cases of $k = 1$ and $k = m$ of Inequality~\eqref{eq:projIneq2} give lower bounds of $\frac{{\rm rank}(P)}{mn}$ and $1$, respectively -- the remaining lower bounds are just the linear interpolation of these two extremal cases. The bounds provided by Inequality~\eqref{eq:projIneq1} and Inequality~\eqref{eq:projIneq2} are used in the applications below. See \cite{JK09b} for a more detailed comparison of these inequalities.   }
\end{remark}

\section{Spectral Inequalities and Entanglement Witnesses}\label{sec:SpectralInequalities}

    In this section we derive a set of conditions for testing when a Hermitian operator is and is not $k$-block positive.
    The problem of determining $k$-block positivity is central to
    entanglement theory, as $k$-block positive operators can be
    used to detect the Schmidt number of mixed states via the
    theory of $k$-entanglement witnesses \cite{TH00,HHH96,KSu05}.

    Throughout this section, if $X = X^*$ then we will denote the positive eigenvalues of $X$ by $\{\lambda^{+}_i\}$ and the corresponding eigenvectors by $\{\ket{v^{+}_i}\}$. We will similarly denote the negative eigenvalues by $\{\lambda^{-}_i\}$ and the corresponding eigenvectors by $\{\ket{v^{-}_i}\}$, and the eigenvectors corresponding to the zero eigenvalues by $\{\ket{v^{0}_i}\}$. $X^{+} := \sum_i \lambda^{+}_i \ketbra{v^{+}_i}{v^{+}_i} \geq 0$ and $X^{-} := \sum_i \lambda^{-}_i \ketbra{v^{-}_i}{v^{-}_i} \leq 0$ are defined to be the positive and negative parts of $X$, respectively. Similarly, $P_{X}^{0} := \sum_i \ketbra{v^{0}_i}{v^{0}_i}$ and $P_{X}^{-} := \sum_i \ketbra{v^{-}_i}{v^{-}_i}$ denote the projections onto the nullspace and negative part of $X$, respectively.

\begin{thm}\label{thm:kposSpectral}
  Let $X = X^* \in \cl{L}(\cl{H}_n) \otimes \cl{L}(\cl{H}_m)$. Then
  \begin{enumerate}
    \item If $\big\| P_X^{-} \big\|_{S(k)} = 1$ then $X$ is not $k$-block positive.
    \item If $\big\| P_X^{0} + P_X^{-} \big\|_{S(k)} < 1$ and $\lambda_i^+ \geq \frac{\| X^{-} \|_{S(k)}}{1 - \| P_X^0 + P_X^{-} \|_{S(k)}}$ for all $i$, then $X$ is $k$-block positive.
        \item If $\big\| P_X^{-} \big\|_{S(k)} < 1$, all of the negative eigenvalues are equal, $X$ is nonsingular, and $\lambda_i^+ < \frac{\| X^{-} \|_{S(k)}}{1 - \| P_X^{-} \|_{S(k)}}$ for all $i$, then $X$ is not $k$-block positive.
    \end{enumerate}
\end{thm}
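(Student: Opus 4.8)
The three parts all hinge on Corollary~\ref{cor:kPosInf1} together with a decomposition of $X$ into its positive and negative parts, so I would begin by writing $X = X^+ + X^-$ and then rescaling: pick a scalar $c > 0$ (to be chosen), and study when $X + cP$ is $k$-block positive for suitable projections $P$, eventually reducing everything to inequalities among the $k$th operator norms of $P_X^-$, $P_X^0 + P_X^-$, and $X^-$.

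For part (1), the plan is to produce a state $\rho$ with $SN(\rho) \le k$ and $\Tr(X\rho) < 0$, invoking Lemma~\ref{prop:kPos}. Since $\big\|P_X^-\big\|_{S(k)} = 1$, Proposition~\ref{prop:MatMultDiffVectors} (the supremum is attained by compactness) gives a pure state $\ket{v}$ with $SR(\ket{v}) \le k$ lying in the range of $P_X^-$, i.e., in the negative eigenspace of $X$; then $\bra{v}X\ket{v} = \bra{v}X^-\ket{v} < 0$, so $\rho = \ketbra{v}{v}$ works.

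For parts (2) and (3) the idea is to estimate, for an arbitrary unit $\ket{v}$ with $SR(\ket{v}) \le k$, the quantity $\bra{v}X\ket{v} = \bra{v}X^+\ket{v} - \bra{v}(-X^-)\ket{v}$. Write $\ket{v} = \ket{v}_+ + \ket{v}_0 + \ket{v}_-$ for the components in the positive, null, and negative eigenspaces of $X$. The term $\bra{v}X^+\ket{v}$ is at least $\lambda_{\min}^+$ times $\big\|\ket{v}_+\big\|^2 = 1 - \big\|\ket{v}_0 + \ket{v}_-\big\|^2$, and the latter is at most $\big\|P_X^0 + P_X^-\big\|_{S(k)}$ by the definition of the operator norm applied to the projection (here using $SR(\ket{v}) \le k$). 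Meanwhile $\bra{v}(-X^-)\ket{v} \le \big\|X^-\big\|_{S(k)}$. Putting these together, $\bra{v}X\ket{v} \ge \lambda_{\min}^+\big(1 - \big\|P_X^0 + P_X^-\big\|_{S(k)}\big) - \big\|X^-\big\|_{S(k)}$, and the hypothesis on $\lambda_i^+$ makes this nonnegative; since $\ket{v}$ was arbitrary, Lemma~\ref{prop:kPos} gives $k$-block positivity, proving (2). For (3), one argues in the reverse direction: choosing $\ket{v}$ cleverly (maximizing an appropriate overlap) and using that all negative eigenvalues coincide — so $-X^- = \mu P_X^-$ for a single $\mu > 0$ — one shows the bound above is in fact an equality up to the relevant extremal state, so the strict inequality $\lambda_i^+ < \|X^-\|_{S(k)} / (1 - \|P_X^-\|_{S(k)})$ forces $\bra{v}X\ket{v} < 0$ for some admissible $\ket{v}$.

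The main obstacle is part (3): unlike (2), which is a one-directional estimate, (3) requires exhibiting a \emph{specific} low-Schmidt-rank witness on which $X$ is negative, and the three-way split of $\ket{v}$ into positive/null/negative components must be controlled simultaneously — I would need the maximizer for $\big\|P_X^-\big\|_{S(k)}$ (or $\big\|P_X^0 + P_X^-\big\|_{S(k)}$) to be realizable by a vector that also has small overlap with the positive eigenspace, which is where the nonsingularity assumption ($P_X^0 = 0$) and the equality of the negative eigenvalues get used to collapse the problem to a clean trade-off between $\big\|\ket{v}_+\big\|^2$ and $\big\|\ket{v}_-\big\|^2$. Getting the constants to line up exactly with the stated threshold is the delicate part.
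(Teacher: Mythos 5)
Your proposal is correct and follows essentially the same route as the paper: part (1) via compactness of the set of Schmidt-rank-$\le k$ states producing a witness in ${\rm Range}(P_X^-)$, part (2) via the chain $\bra{v}X\ket{v} \ge \lambda_{\min}^+\bigl(1 - \|P_X^0 + P_X^-\|_{S(k)}\bigr) - \|X^-\|_{S(k)}$, and part (3) by evaluating at the extremal vector. The ``delicate part'' you flag in (3) resolves exactly as you suspect: since $-X^- = \mu_0 P_X^-$, the maximizer of $\bra{v}(-X^-)\ket{v}$ over $SR(\ket{v})\le k$ is also the maximizer of $\bra{v}P_X^-\ket{v}$, and nonsingularity gives $\sum_i|\braket{v}{v_i^+}|^2 = 1 - \|P_X^-\|_{S(k)} > 0$ there, so the strict eigenvalue hypothesis yields $\bra{v}X\ket{v} < 0$ with no further work.
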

\begin{proof}
    To see statement (1), observe that there must be a vector $\ket{v} \in {\rm Range}(P_X^{-})$ such that $SR(\ket{v}) \leq k$. It follows that $\bra{v}X\ket{v} = \bra{v}X^{-}\ket{v} < 0$ and so $X$ is not $k$-block positive by Lemma~\ref{prop:kPos}.

    To see statement (2), let $\ket{v}$ be such that $SR(\ket{v}) \leq k$ and define $\mu := \frac{\| X^{-} \|_{S(k)}}{1 - \| P_X^0 + P_X^{-} \|_{S(k)}}$. Then, using the Spectral decomposition for $X^+$, the definition of the $k$th operator norm, and the hypotheses of (2), we have
    \begin{align*}
        \bra{v}X\ket{v} & = \bra{v}X^{+}\ket{v} - \big|\bra{v}X^{-}\ket{v}\big| \\ & \geq \sum_i \lambda_i^{+} |\braket{v}{v_i^{+}}|^2 - \big\|X^{-}\big\|_{S(k)} \\
         & \geq \mu \sum_i |\braket{v}{v_i^{+}}|^2 - \big\|X^{-}\big\|_{S(k)} \\ & \geq \mu (1 - \| P_X^0 + P_X^{-} \|_{S(k)}) - \big\|X^{-}\big\|_{S(k)} = 0,
    \end{align*}
and $X$ is $k$-block positive by Lemma~\ref{prop:kPos}.

    To see statement (3), observe that the set of unit vectors $\ket{v}$ with $SR(\ket{v}) \leq k$ is compact and so there exists a particular $\ket{v}$ with $SR(\ket{v}) \leq k$ such that $\big|\bra{v}X^{-}\ket{v}\big| = \big\|X^{-}\big\|_{S(k)}$. Define $\mu := \frac{\| X^{-} \|_{S(k)}}{1 - \| P_X^{-} \|_{S(k)}}$. Then similarly we have
    \begin{align*}
        \bra{v}X\ket{v} & = \bra{v}X^{+}\ket{v} - \big|\bra{v}X^{-}\ket{v}\big| \\ & = \sum_i \lambda_i^{+} |\braket{v}{v_i^{+}}|^2 - \big\|X^{-}\big\|_{S(k)} \\
         & < \mu \sum_i |\braket{v}{v_i^{+}}|^2 - \big\|X^{-}\big\|_{S(k)} \\ & = \mu (1 - \| P_X^{-} \|_{S(k)}) - \big\|X^{-}\big\|_{S(k)} = 0,
    \end{align*}
and again Lemma~\ref{prop:kPos} applies to show that $X$ is not $k$-block positive.
\end{proof}

\begin{remark}
{\rm On its face, Theorem~\ref{thm:kposSpectral} appears to be a
very technical result that may not be of much use due to the
difficulty of computing the $k$th operator norms. However, it is
not difficult to derive computable corollaries from it. In fact,
it implies a wide array of previously-known and new tests for
$k$-positivity and $k$-entanglement witnesses. These consequences
are presented below.


First, to see that Theorem~\ref{thm:kposSpectral} implies the
$k$-positivity results of Chru\'{s}ci\'{n}ski and Kossakowski
\cite{CKo09}, use Proposition~\ref{prop:matrixVectorNorms} and
simply note that their usage of the Ky Fan norm of Kraus operators
coincides with the $k$th norm of the corresponding
eigenvectors. It follows that this result also implies the
$k$-positivity test of Takesaki and Tomiyama \cite{TT83} and the
positivity test of Benatti, Floreanini, and Piani \cite{BFP04}, as
the tests of \cite{CKo09} do as well.  }
\end{remark}

Another corollary of this theorem is the following result of Kuah
and Sudarshan \cite{KSu05}.
\begin{cor}
  Suppose $\Phi : \cl{L}(\cl{H}_n) \rightarrow \cl{L}(\cl{H}_m)$ is a Hermicity-preserving linear map represented in its canonical Kraus representation $\Phi(\rho) = \sum_i\lambda_i^+ E_i\rho E_i^* + \sum_i\lambda_i^- F_i\rho F_i^*$, with the set of operators $\big\{ E_1, E_2, \ldots, F_1, F_2, \ldots \big\}$ forming an orthonormal basis in the Hilbert-Schmidt inner product. If ${\rm rank}(F_i) \leq k$ for some $i$, then $\Phi$ is not $k$-positive.
\end{cor}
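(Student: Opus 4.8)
The plan is to translate the statement, via the Choi--Jamiolkowski isomorphism, into a statement about the $k$-block positivity of the Choi matrix of $\Phi$, and then invoke part~(1) of Theorem~\ref{thm:kposSpectral}. First I would let $X = X^* \in \cl{L}(\cl{H}_n) \otimes \cl{L}(\cl{H}_m)$ be the Hermitian operator corresponding to $\Phi$ under the isomorphism, so that by definition $\Phi$ is $k$-positive if and only if $X$ is $k$-block positive. The key observation is that a Kraus representation of $\Phi$ whose Kraus operators form an orthonormal set in the Hilbert--Schmidt inner product corresponds exactly to the spectral decomposition of $X$: if $X = \sum_i \lambda_i^+ \ketbra{v_i^+}{v_i^+} + \sum_i \lambda_i^- \ketbra{v_i^-}{v_i^-}$ is the spectral decomposition, then the operators $A_{v_i^+}$ and $A_{v_i^-}$ associated to the eigenvectors (as in the proof of the Schmidt Decomposition Theorem) are precisely the Kraus operators $E_i$ and $F_i$, with the eigenvalues $\lambda_i^\pm$ as coefficients. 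Since the vector--operator isomorphism $\ket{u}\otimes\ket{v} \mapsto \ket{u}\overline{\bra{v}}$ carries Schmidt rank to operator rank, this gives $SR(\ket{v_i^-}) = {\rm rank}(F_i)$ for every $i$.

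Next, suppose ${\rm rank}(F_{i_0}) \leq k$ for some index $i_0$. Then $\ket{v_{i_0}^-}$ is a unit vector with $SR(\ket{v_{i_0}^-}) \leq k$ lying in ${\rm Range}(P_X^-)$, so it is admissible in the supremum defining $\big\| P_X^- \big\|_{S(k)}$, and since $P_X^- \ket{v_{i_0}^-} = \ket{v_{i_0}^-}$ we get $\big\| P_X^- \big\|_{S(k)} \geq \bra{v_{i_0}^-}P_X^-\ket{v_{i_0}^-} = 1$. The reverse bound $\big\| P_X^- \big\|_{S(k)} \leq \big\| P_X^- \big\| = 1$ is automatic, so $\big\| P_X^- \big\|_{S(k)} = 1$. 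By part~(1) of Theorem~\ref{thm:kposSpectral}, $X$ is not $k$-block positive, and therefore $\Phi$ is not $k$-positive.

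The only real work lies in the middle step: verifying that in the canonical (orthonormal) Kraus representation the operators $F_i$ attached to the negative coefficients are genuinely the operators associated to the eigenvectors of the negative part of the Choi matrix, and that ${\rm rank}(F_i)$ matches $SR(\ket{v_i^-})$. This is essentially bookkeeping with the Choi--Jamiolkowski isomorphism; one must be mildly careful about the complex conjugation built into $\ket{u}\otimes\ket{v}\mapsto\ket{u}\overline{\bra{v}}$, but conjugating a matrix preserves its rank, so the identification of Schmidt rank with Kraus-operator rank survives. Once this is in place, everything else is an immediate application of Theorem~\ref{thm:kposSpectral}(1), so I do not anticipate any genuine difficulty.
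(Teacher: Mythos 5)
Your proposal is correct and follows essentially the same route as the paper: identify the orthonormal Kraus operators with the eigenvectors of the Choi matrix via the vector--operator isomorphism (so that ${\rm rank}(F_i) = SR(\ket{v_i^-})$), conclude $\big\|P_X^-\big\|_{S(k)} = 1$, and invoke part~(1) of Theorem~\ref{thm:kposSpectral}. The extra care you take with the conjugation in the isomorphism and with the normalization is harmless bookkeeping, as you note, since rank is unaffected.
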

\begin{proof}
    Simply recall that the Kraus operators $E_i$ and $F_i$ are exactly the operators to which the positive and negative eigenvectors of $X := (id_n \otimes \Phi)(E)$ are associated via the isomorphism used in the proof of the Schmidt Decomposition Theorem. Thus the rank of $F_i$ coincides with the Schmidt rank of the corresponding eigenvector $\ket{v_i}$.

    If $SR(\ket{v_i}) \leq k$ (i.e., ${\rm rank}(F_i) \leq k$) for some $i$ then $\big|\bra{v_i}P_X^{-}\ket{v_i}\big| = 1$ and so $\big\|P_X^{-}\big\|_{S(k)} = 1$. Condition (1) of Theorem~\ref{thm:kposSpectral} then gives the result.
\end{proof}

We have no references for the next two corollaries, though we expect they are well-known. The following corollary provides a characterization of the maximum number of negative eigenvalues that a $k$-block positive operator can have.
\begin{cor}\label{cor:maxKboxNegEval}
  Suppose $X = X^* \in \cl{L}(\cl{H}_n) \otimes \cl{L}(\cl{H}_m)$ is $k$-block positive. Then it has at most $(n - k)(m - k)$ negative eigenvalues.
\end{cor}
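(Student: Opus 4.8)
The plan is to argue by contradiction, the essential ingredient being the tight dimension bound of Cubitt, Montanaro and Winter (Theorem~\ref{thm:CMW08}). Suppose $X = X^*$ is $k$-block positive but has at least $(n-k)(m-k)+1$ negative eigenvalues, and let $\cl{S} := {\rm Range}(P_X^-) = \spn\{\ket{v_i^-}\}$ be the span of the corresponding eigenvectors, so that $\dim\cl{S} \geq (n-k)(m-k)+1$.

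First I would record the arithmetic identity $(n-k)(m-k) = (n-(k+1)+1)(m-(k+1)+1)$. Applying Theorem~\ref{thm:CMW08} with $k+1$ in place of $k$, the maximum dimension of a subspace of $\cl{H}_n \otimes \cl{H}_m$ on which every vector has Schmidt rank at least $k+1$ is exactly $(n-k)(m-k)$; since $\dim\cl{S}$ strictly exceeds this, $\cl{S}$ must contain a unit vector $\ket{v}$ with $SR(\ket{v}) \leq k$. (This is the same mechanism used in the proof of Proposition~\ref{prop:lowerBoundEig} with $r = k$; equivalently one can phrase it as $\big\|P_X^-\big\|_{S(k)} = 1$ and invoke statement~(1) of Theorem~\ref{thm:kposSpectral}.)

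Finally I would extract the contradiction. Since $\ket{v} \in \spn\{\ket{v_i^-}\}$ is a unit vector, writing $X^- = \sum_i \lambda_i^-\ketbra{v_i^-}{v_i^-}$ we get
\[
  \bra{v}X\ket{v} = \bra{v}X^-\ket{v} = \sum_i \lambda_i^-|\braket{v_i^-}{v}|^2 < 0,
\]
with strict negativity because $\sum_i|\braket{v_i^-}{v}|^2 = 1$ forces some term to be strictly negative while every $\lambda_i^- < 0$. On the other hand $\rho := \ketbra{v}{v}$ has $SN(\rho) \leq SR(\ket{v}) \leq k$, so Lemma~\ref{prop:kPos} applied to the $k$-block positive operator $X$ gives $\Tr(X\rho) = \bra{v}X\ket{v} \geq 0$, a contradiction. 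Hence $X$ has at most $(n-k)(m-k)$ negative eigenvalues. There is no genuine obstacle here: the entire content is recognizing that the statement is an immediate consequence of the Cubitt--Montanaro--Winter dimension count, so the only things requiring care are which subspace to feed into Theorem~\ref{thm:CMW08} and the index bookkeeping.
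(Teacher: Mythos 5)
Your proposal is correct and follows essentially the same route as the paper: both arguments apply the Cubitt--Montanaro--Winter bound (Theorem~\ref{thm:CMW08}) to ${\rm Range}(P_X^-)$ to extract a unit vector of Schmidt rank at most $k$, and then conclude $\bra{v}X\ket{v}<0$, contradicting $k$-block positivity (the paper packages this last step as $\big\|P_X^-\big\|_{S(k)}=1$ together with condition~(1) of Theorem~\ref{thm:kposSpectral}, exactly as you note parenthetically). Your index bookkeeping with $k+1$ in Theorem~\ref{thm:CMW08} is the right way to make the dimension count explicit.
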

\begin{proof}
    Suppose $X$ has more than $(n - k)(m - k)$ negative eigenvalues. Then, by Theorem~\ref{thm:CMW08} it follows that there exists $\ket{v} \in {\rm Range}(P_X^{-})$ with $SR(\ket{v}) \leq k$. Hence we have $\big\| P_X^{-} \big\|_{S(k)} = 1$ and so condition (1) of Theorem~\ref{thm:kposSpectral} tells us that $X$ is not $k$-block positive.
\end{proof}

The following corollary shows just how negative the negative eigenvalues of a $k$-block positive operator can be.
\begin{cor}\label{cor:mostNegEval}
  Suppose $X = X^* \in \cl{L}(\cl{H}_n) \otimes \cl{L}(\cl{H}_m)$ is $k$-block positive. Denote the maximal and minimal eigenvalues of $X$ by $\lambda_{max}$ and $\lambda_{min}$, respectively. Then
  \[
    \frac{\lambda_{min}}{\lambda_{max}} \geq 1 - \frac{m}{k}.
    \]
\end{cor}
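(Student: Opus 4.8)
The plan is to reduce the claim to the elementary norm bound $\|Y\|_{S(k)} \geq \tfrac{k}{m}\|Y\|$ recorded just after Theorem~\ref{thm:MatrixEquiv01}, applied to the positive operator $Y := \lambda_{max}I - X$, together with Corollary~\ref{cor:kPosInf1}.

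First I would dispose of the trivialities. Since $k \leq m$, the quantity $1 - \tfrac{m}{k}$ is nonpositive, so if $\lambda_{min} \geq 0$ there is nothing to prove; assume henceforth $\lambda_{min} < 0$. I claim that then $\lambda_{max} > 0$: the maximally mixed state $\tfrac{1}{mn}I$ is separable, hence has Schmidt number $1 \leq k$, so Lemma~\ref{prop:kPos} gives $\Tr(X) = mn\,\Tr\!\big(X\cdot\tfrac{1}{mn}I\big) \geq 0$; were $\lambda_{max} \leq 0$, all eigenvalues of $X$ would be nonpositive and, since at least one is strictly negative, we would have $\Tr(X) < 0$, a contradiction. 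In particular $\lambda_{min}/\lambda_{max}$ is well defined.

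Next I would set $Y := \lambda_{max}I - X$. Because $\lambda_{max}$ is the top eigenvalue of $X$ we have $Y \geq 0$, and the eigenvalues of $Y$ are $\lambda_{max} - \lambda_i$, so $\|Y\| = \lambda_{max} - \lambda_{min}$. Since $X = \lambda_{max}I - Y$ is $k$-block positive by hypothesis, Corollary~\ref{cor:kPosInf1} (with $c = \lambda_{max}$) gives $\lambda_{max} \geq \|Y\|_{S(k)}$. Combining this with $\|Y\|_{S(k)} \geq \tfrac{k}{m}\|Y\| = \tfrac{k}{m}(\lambda_{max} - \lambda_{min})$ yields $\lambda_{max} \geq \tfrac{k}{m}(\lambda_{max} - \lambda_{min})$; dividing by $\lambda_{max} > 0$ and rearranging gives $\tfrac{\lambda_{min}}{\lambda_{max}} \geq 1 - \tfrac{m}{k}$.

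There is essentially no obstacle here: the result is just a repackaging of Corollary~\ref{cor:kPosInf1} with the norm-equivalence estimate of Theorem~\ref{thm:MatrixEquiv01}. The only points requiring a moment's care are the sign bookkeeping and excluding the degenerate case $X = 0$ (equivalently $\lambda_{max}=\lambda_{min}=0$), for which the stated inequality is to be read as vacuous.
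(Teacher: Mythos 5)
Your proof is correct, but it takes a genuinely different route from the paper's. The paper first reduces (by adding a positive perturbation $Q$) to the case where $X$ is nonsingular with a single negative eigenvalue, then applies the contrapositive of condition~(3) of Theorem~\ref{thm:kposSpectral} to get $\lambda_{max} \geq -\lambda_{min}\frac{\| P_X^{-} \|_{S(k)}}{1 - \| P_X^{-} \|_{S(k)}}$, and finishes with the bound $\| P_X^{-} \|_{S(k)} \geq \frac{k}{m}$. You instead apply Corollary~\ref{cor:kPosInf1} to the positive operator $Y := \lambda_{max}I - X$ (so that $k$-block positivity of $X = \lambda_{max}I - Y$ becomes exactly $\lambda_{max} \geq \|Y\|_{S(k)}$) and then invoke the same norm-equivalence estimate $\|Y\|_{S(k)} \geq \frac{k}{m}\|Y\| = \frac{k}{m}(\lambda_{max}-\lambda_{min})$. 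Both arguments ultimately rest on the $\frac{k}{m}$ lower bound from Theorem~\ref{thm:MatrixEquiv01}, but yours is shorter and avoids the somewhat delicate ``without loss of generality'' perturbation step and the compactness considerations hidden in condition~(3); it also makes transparent that the only inequality being used beyond the exact characterization of Corollary~\ref{cor:kPosInf1} is the norm equivalence. The paper's version, on the other hand, illustrates the spectral test of Theorem~\ref{thm:kposSpectral} in action, which is the point of that section. Your auxiliary observations (that $\lambda_{max}>0$ whenever $\lambda_{min}<0$, via $\Tr(X)\geq 0$ from Lemma~\ref{prop:kPos}, and the vacuity of the $X=0$ case) are correct and in fact tidy up sign issues the paper leaves implicit.
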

\begin{proof}
    If $\lambda_{min} \geq 0$ then the result is trivial. We thus assume that $\lambda_{min} < 0$. Suppose without loss of generality that $X$ has only one negative eigenvalue and is nonsingular (if this is not the case, we can add a suitable positive operator $Q$ to $X$ so that $X + Q$ is $k$-block positive, has a single negative eigenvalue equal to $\lambda_{min}$ and is nonsingular). If $X$ is $k$-block positive then condition (1) of Theorem~\ref{thm:kposSpectral} says that $\big\|P_X^{-}\big\|_{S(k)} < 1$. Condition (3) then says that
    \[
        \lambda_{max} \geq \frac{\| X^{-} \|_{S(k)}}{1 - \| P_X^{-} \|_{S(k)}} = -\lambda_{min}\frac{\| P_X^{-} \|_{S(k)}}{1 - \| P_X^{-} \|_{S(k)}}.
    \]

    Then
    \[
        \frac{\lambda_{min}}{\lambda_{max}} \geq \frac{\| P_X^{-} \|_{S(k)} - 1}{\| P_X^{-} \|_{S(k)}} = 1 - \frac{1}{\| P_X^{-} \|_{S(k)}} \geq 1 - \frac{m}{k}.
    \]
\end{proof}

By using Theorem~\ref{thm:mainProjRes} in the final step of the above
proof, we can derive the following bounds that in some sense
``interpolate'' between Corollary~\ref{cor:maxKboxNegEval} and
Corollary~\ref{cor:mostNegEval}, giving lower bounds on
$\lambda_{min}$ that depend on the number of negative eigenvalues
of $X$.
\begin{cor}
  Suppose $X = X^* \in \cl{L}(\cl{H}_n) \otimes \cl{L}(\cl{H}_m)$ is $k$-block positive with $r$ negative eigenvalues. Denote the maximal and minimal eigenvalues of $X$ by $\lambda_{max}$ and $\lambda_{min}$, respectively. Then
  \begin{align*}
    \frac{\lambda_{min}}{\lambda_{max}} & \geq 1 - \frac{\big\lceil \frac{1}{2}\big( n + m - \sqrt{(n-m)^2 + 4{r - 4}} \big) \big\rceil}{k} \ \ \text{ and} \\
    \frac{\lambda_{min}}{\lambda_{max}} & \geq 1 - \frac{mn(m-1)}{(k-1)mn + (m-k)r}.
    \end{align*}
\end{cor}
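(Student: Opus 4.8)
The plan is to rerun the proof of Corollary~\ref{cor:mostNegEval} almost verbatim, replacing the crude bound $\|P\|_{S(k)} \geq k/m$ used there by the sharper estimates of Theorem~\ref{thm:mainProjRes}. If $\lambda_{min} \geq 0$ there is nothing to prove, so assume $\lambda_{min} < 0$; then $\lambda_{max} > 0$, since a nonzero $k$-block positive operator is not negative semidefinite.

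First I would reduce, as in Corollary~\ref{cor:mostNegEval}, to the case where $X$ is nonsingular and all of its negative eigenvalues equal $\lambda_{min}$ --- but, crucially, \emph{without} discarding any negative directions, so that the negative eigenspace still has dimension $r$. Granting this, $P_X^{-}$ is a rank-$r$ orthogonal projection and $X^{-} = \lambda_{min}P_X^{-}$, hence $\|X^{-}\|_{S(k)} = |\lambda_{min}|\,\|P_X^{-}\|_{S(k)}$. Part~(1) of Theorem~\ref{thm:kposSpectral} gives $\|P_X^{-}\|_{S(k)} < 1$ (otherwise the negative eigenspace would contain a vector of Schmidt rank at most $k$ and $X$ would fail to be $k$-block positive). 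Since the negative eigenvalues now coincide and $X$ is nonsingular, part~(3) of Theorem~\ref{thm:kposSpectral} applies; as $X$ is $k$-block positive, its conclusion must be false, so its hypothesis fails, giving
\[
  \lambda_{max} \geq \frac{\|X^{-}\|_{S(k)}}{1 - \|P_X^{-}\|_{S(k)}} = \frac{|\lambda_{min}|\,\|P_X^{-}\|_{S(k)}}{1 - \|P_X^{-}\|_{S(k)}},
\]
which rearranges to $\lambda_{min}/\lambda_{max} \geq 1 - 1/\|P_X^{-}\|_{S(k)}$.

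It then remains to feed in Theorem~\ref{thm:mainProjRes} for the rank-$r$ projection $P_X^{-}$: it gives $\|P_X^{-}\|_{S(k)} \geq \min\{1, k/\lceil\frac{1}{2}(n+m-\sqrt{(n-m)^2+4r-4})\rceil\}$ and $\|P_X^{-}\|_{S(k)} \geq ((k-1)mn+(m-k)r)/(mn(m-1))$, and since $\|P_X^{-}\|_{S(k)} < 1$ the minimum in the first bound is forced to be $k/\lceil\frac{1}{2}(n+m-\sqrt{(n-m)^2+4r-4})\rceil$. Because $t \mapsto 1 - 1/t$ is increasing on $(0,1]$, substituting either estimate for $\|P_X^{-}\|_{S(k)}$ into the previous display produces exactly the two claimed inequalities. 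Everything downstream is routine bookkeeping; the step I expect to be the genuine obstacle is the reduction, because adding a positive operator can only \emph{raise} eigenvalues, so collapsing all $r$ negative eigenvalues onto the \emph{smallest} of them while keeping $X$ $k$-block positive and not increasing $\lambda_{max}$ requires an argument --- for operators whose negative eigenvalues already coincide, no reduction is needed and the estimates follow at once.
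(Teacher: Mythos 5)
Your strategy is the same as the paper's: the paper justifies this corollary only by the one\--line remark that one should ``use Theorem~\ref{thm:mainProjRes} in the final step of the above proof,'' i.e., rerun the argument of Corollary~\ref{cor:mostNegEval} with the rank-$r$ bounds of Theorem~\ref{thm:mainProjRes} in place of the crude estimate $\big\|P_X^{-}\big\|_{S(k)} \geq k/m$. Everything you do after your reduction step --- the appeal to parts (1) and (3) of Theorem~\ref{thm:kposSpectral}, the identity $\big\|X^{-}\big\|_{S(k)} = |\lambda_{min}|\,\big\|P_X^{-}\big\|_{S(k)}$, and the monotonicity of $t \mapsto 1 - 1/t$ --- is correct and is exactly what the paper intends.

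The problem is the reduction itself, which you flag as ``the genuine obstacle'' but do not carry out, and it is a genuine gap. The device used in Corollary~\ref{cor:mostNegEval} --- add a positive operator $Q$ to wash out all but one negative eigenvalue --- preserves $k$-block positivity precisely because it only \emph{raises} eigenvalues, but it collapses the negative eigenspace to rank one, at which point Theorem~\ref{thm:mainProjRes} returns essentially $k/m$ and all $r$-dependence is lost. To keep the rank at $r$ while equalizing the negative eigenvalues at $\lambda_{min}$ you would instead have to \emph{lower} the other $r-1$ negative eigenvalues, i.e., subtract a positive operator, and that does not preserve $k$-block positivity. Nor can the reduction simply be bypassed: testing the unreduced $X$ against a Schmidt-rank-$\leq k$ vector $\ket{v}$ maximizing $\bra{v}P_X^{-}\ket{v}$ gives only
\begin{align*}
0 \leq \bra{v}X\ket{v} \leq \lambda_{max}\bigl(1 - \big\|P_X^{-}\big\|_{S(k)}\bigr) + \lambda^{-}\big\|P_X^{-}\big\|_{S(k)},
\end{align*}
where $\lambda^{-}$ is the \emph{largest} (least negative) negative eigenvalue, because $\bra{v}X^{-}\ket{v} \leq \lambda^{-}\bra{v}P_X^{-}\ket{v}$; replacing $\lambda^{-}$ by $\lambda_{min}$ here is precisely the step that requires the equal-eigenvalue hypothesis. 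So your argument, as written, establishes the corollary only in the special case where all $r$ negative eigenvalues coincide. In the general case the missing reduction carries the entire content of the statement --- and, to be fair, the paper's own one-sentence proof sketch leaves exactly the same step unaddressed, so you have correctly located the weak point even though you have not repaired it.
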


One final corollary shows that we now have a complete spectral characterization of the $k$-block positivity of Hermitian operators with exactly two distinct eigenvalues. The classification is trivial when both of the eigenvalues are negative or positive, but we believe that this is the first spectral classification for the case when they are of opposite signs.
\begin{cor}\label{cor:kposTwoEvals}
  Let $X = X^* \in \cl{L}(\cl{H}_n) \otimes \cl{L}(\cl{H}_m)$ have two distinct eigenvalues $\lambda_1 > \lambda_2$. Then $X$ is $k$-block positive if and only if
  \begin{align}\label{eq:kPosIneq}
    \| P_X^{-} \|_{S(k)} \leq \frac{\lambda_1}{\lambda_1 - \lambda_2}.
  \end{align}
\end{cor}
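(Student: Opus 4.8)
The plan is to deduce the characterization from Corollary~\ref{cor:kPosInf1}, which for a positive operator $Y$ says that $cI - Y$ is $k$-block positive if and only if $c \geq \big\|Y\big\|_{S(k)}$. First I would record the spectral picture: since $\lambda_1 \neq \lambda_2$, write $X = \lambda_1 P_1 + \lambda_2 P_2$, where $P_1$ and $P_2$ are the complementary orthogonal spectral projections of $X$ onto its two eigenspaces, so $P_1 + P_2 = I$ and $X$ is nonsingular unless $0 \in \{\lambda_1, \lambda_2\}$. The proof then splits according to the signs of $\lambda_1$ and $\lambda_2$, the only substantive case being $\lambda_1 > 0 > \lambda_2$.

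In that case the unique negative eigenvalue is $\lambda_2$, so $P_X^{-} = P_2$ and hence $P_1 = I - P_X^{-}$; substituting this gives $X = \lambda_1(I - P_X^{-}) + \lambda_2 P_X^{-} = \lambda_1 I - (\lambda_1 - \lambda_2)P_X^{-}$. Since $\lambda_1 - \lambda_2 > 0$ and scaling by a positive constant preserves $k$-block positivity (immediate from Lemma~\ref{prop:kPos}), $X$ is $k$-block positive if and only if $\frac{1}{\lambda_1-\lambda_2}X = \frac{\lambda_1}{\lambda_1-\lambda_2}I - P_X^{-}$ is. Applying Corollary~\ref{cor:kPosInf1} to the positive operator $P_X^{-}$ with $c = \frac{\lambda_1}{\lambda_1-\lambda_2}$ then produces exactly Inequality~\eqref{eq:kPosIneq}.

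For the two remaining (``trivial'') cases I would argue by hand. If $\lambda_2 \geq 0$ then $\lambda_1 > \lambda_2 \geq 0$, so $X \geq 0$ is $k$-block positive, while $P_X^{-} = 0$ makes the left side of \eqref{eq:kPosIneq} equal to $0$ and the right side positive, so both halves of the ``if and only if'' hold. If $\lambda_1 \leq 0$ then $\lambda_2 < \lambda_1 \leq 0$: here $X$ is not $k$-block positive, because the Schmidt-rank-$\leq k$ vectors span $\cl{H}_n \otimes \cl{H}_m$ and so some such $\ket{v}$ is not orthogonal to the $\lambda_2$-eigenspace, giving $\bra{v}X\ket{v} \leq \lambda_2\bra{v}P_2\ket{v} < 0$; and \eqref{eq:kPosIneq} fails because its right side $\frac{\lambda_1}{\lambda_1-\lambda_2}$ is $\leq 0$ whereas $\big\|P_X^{-}\big\|_{S(k)} \geq \frac{k}{m} > 0$ by Theorem~\ref{thm:MatrixEquiv01} (as $P_X^{-}$ is a nonzero projection).

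I expect the only real difficulty to be the bookkeeping: one must remember that $P_X^{-}$ coincides with the spectral projection $P_2$ exactly when $\lambda_2 < 0 \leq \lambda_1$ (it is $0$ when $\lambda_2 \geq 0$ and equals $I$ when $\lambda_1 < 0$), and check that the non-strict inequality in \eqref{eq:kPosIneq} correctly absorbs the boundary subcases $\lambda_1 = 0$ and $\lambda_2 = 0$. An alternative, somewhat longer route avoids Corollary~\ref{cor:kPosInf1} and instead reads off conditions (1), (2), (3) of Theorem~\ref{thm:kposSpectral}: the assumption of two distinct eigenvalues is precisely what makes ``all negative eigenvalues equal'' automatic and, in the opposite-sign case, forces nonsingularity, so that a direct computation of $\frac{\|X^{-}\|_{S(k)}}{1 - \|P_X^{-}\|_{S(k)}}$ appearing in those conditions again yields \eqref{eq:kPosIneq}. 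The reduction via Corollary~\ref{cor:kPosInf1} is cleaner.
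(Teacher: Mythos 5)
Your proof is correct, but it follows a genuinely different route from the paper's. The paper deduces the corollary from the general spectral test, Theorem~\ref{thm:kposSpectral}: for the forward direction it applies condition (1) to conclude $\|P_X^{-}\|_{S(k)} < 1$ and then the contrapositive of condition (3) (the two-eigenvalue, opposite-sign hypothesis automatically makes all negative eigenvalues equal and $X$ nonsingular) to obtain $\lambda_1 \geq -\lambda_2 \|P_X^{-}\|_{S(k)}/(1 - \|P_X^{-}\|_{S(k)})$; for the converse it verifies condition (2) using $P_X^{0} = 0$; and it dismisses the remaining sign patterns as trivial. You instead note that in the opposite-sign case $X = \lambda_1 I - (\lambda_1 - \lambda_2)P_X^{-}$, so after dividing by the positive constant $\lambda_1 - \lambda_2$ (which preserves $k$-block positivity by Lemma~\ref{prop:kPos}) the statement is exactly Corollary~\ref{cor:kPosInf1} applied to the positive operator $P_X^{-}$ with $c = \lambda_1/(\lambda_1 - \lambda_2)$. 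Your reduction is shorter and arguably more illuminating: it exposes the two-eigenvalue case as precisely the case where $X$ is an affine function of a single projection, which is the situation Corollary~\ref{cor:kPosInf1} already characterizes exactly, and it avoids the compactness and strict-inequality bookkeeping inside Theorem~\ref{thm:kposSpectral}. What the paper's route buys is a demonstration that the abstract spectral test is sharp (conditions (2) and (3) meet with no gap) for operators with two distinct eigenvalues, which fits the narrative of Section~\ref{sec:SpectralInequalities}. Your explicit handling of the boundary subcases $\lambda_2 \geq 0$ and $\lambda_1 \leq 0$, including the observation that $P_X^{-}$ is then $0$ or a nonzero projection with $\|P_X^{-}\|_{S(k)} \geq k/m > 0$, is also more careful than the paper's one-line dismissal, and it is correct as written.
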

\begin{proof}
  If $\lambda_1$ and $\lambda_2$ have the same sign then the result is trivial. We thus assume that $\lambda_1 > 0$ and $\lambda_2 < 0$.

  If $X$ is $k$-block positive, then by condition (1) of Theorem~\ref{thm:kposSpectral} we know that $\| P_X^{-} \|_{S(k)} < 1$. Then condition (3) says that
  \[
    \lambda_1 \geq \frac{\| X^{-} \|_{S(k)}}{1 - \| P_X^{-} \|_{S(k)}} = -\lambda_2\frac{\| P_X^{-} \|_{S(k)}}{1 - \| P_X^{-} \|_{S(k)}}.
  \]

  \noindent The desired inequality follows easily. To see the other direction of the proof, suppose inequality~\eqref{eq:kPosIneq} is satisfied. Then because $\lambda_2 < 0$ it follows that $\| P_X^{-} \|_{S(k)} < 1$. $P_X^{0} = 0$ by hypothesis, so simple algebra shows that condition (2) of Theorem~\ref{thm:kposSpectral} is satisfied.
\end{proof}

\section{Bound Entanglement and Werner States}\label{sec:WernerStates}

One of the most pressing open questions in quantum information
theory is to find a classification of bound entangled states; that
is, states with zero distillable entanglement. A state $\rho$ is
\emph{distillable} if it can be transformed into the maximally
entangled state with only local operations and classical
communication \cite{DCL00}. If a state is separable then it has
positive partial transpose (PPT) \cite{P96}, and PPT states are
bound entangled \cite{H97,HHH98}. However, it is unknown whether
or not there exist states with NPPT that are bound entangled. In this section we use the $k$th operator norms to frame this fundamental question as a
concrete problem on a specific limit. We note that in \cite{CKS09}
it was shown that Chru\'{s}ci\'{n}ski and Kossakowski's
$k$-positivity tests could not be used to find entanglement
witnesses that detect non-positive partial transpose states, and
thus are not useful for trying to determine whether NPPT bound
entangled states exist.

It has been shown that a state $\rho$ is NPPT bound entangled if and only if $(\rho^\Gamma)^{\otimes k}$ is $2$-block positive for all $k \geq 1$ \cite{DSSTT00}. One especially important class of states in the study of bound entangled states is the family of Werner states \cite{W89}, which can be parametrized by a single real variable $\alpha \in [-1,1]$ via
\[
    \rho_\alpha := \frac{1}{n^2 - \alpha n}(I - \alpha nE^\Gamma) \in \cl{L}(\cl{H}_n) \otimes \cl{L}(\cl{H}_n).
\]
In particular, it is known that NPPT bound entangled states exist if and only if there is a Werner state that is NPPT bound entangled \cite{HH98}.

Because the partial transpose of Werner states have only two distinct eigenvalues (as noted in the following proof), Corollary~\ref{cor:kposTwoEvals} applies to this situation and the Schmidt operator norms are a natural tool for tackling this problem. The following result is a starting point.
\begin{prop}\label{prop:werner01}
    Let $\rho_\alpha \in \cl{L}(\cl{H}_n) \otimes \cl{L}(\cl{H}_n)$ be a Werner state. Then $\rho_\alpha^\Gamma$ is $k$-block positive if and only if $\alpha \leq \frac{1}{k}$.
\end{prop}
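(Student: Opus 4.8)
The plan is to reduce the statement to Corollary~\ref{cor:kposTwoEvals} together with the computation $\big\|E\big\|_{S(k)} = k/n$ from Example~\ref{exam:MatrixNormChoi}. First I would compute the partial transpose explicitly: since $I^\Gamma = I$ and $(E^\Gamma)^\Gamma = E$, we get $\rho_\alpha^\Gamma = \frac{1}{n^2 - \alpha n}(I - \alpha n E)$. Because $\alpha < n$, the scalar $n^2 - \alpha n = n(n-\alpha)$ is strictly positive, and because $E = \ketbra{e}{e}$ with $\ket{e} = \frac{1}{\sqrt{n}}\sum_i \ket{e_i}\otimes\ket{e_i}$ is a rank-one orthogonal projection, the operator $\rho_\alpha^\Gamma$ has exactly two eigenvalues: $\lambda_1 := \frac{1}{n(n-\alpha)}$ on $\{\ket{e}\}^\perp$ (multiplicity $n^2 - 1$) and $\lambda_2 := \frac{1 - \alpha n}{n(n-\alpha)}$ on $\spn\{\ket{e}\}$ (multiplicity $1$).

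Next I would split into cases according to the sign of $\lambda_2$. If $\alpha \leq \frac{1}{n}$ then $\lambda_2 \geq 0$, so $\rho_\alpha^\Gamma$ is positive semidefinite and hence $k$-block positive for every $k$; and since $k \leq m = n$ we automatically have $\alpha \leq \frac{1}{n} \leq \frac{1}{k}$, so the claimed equivalence holds throughout this range. If instead $\alpha > \frac{1}{n}$, then $\lambda_2 < 0 < \lambda_1$, so in particular $\lambda_1 \neq \lambda_2$ and Corollary~\ref{cor:kposTwoEvals} applies: $\rho_\alpha^\Gamma$ is $k$-block positive if and only if $\big\| P^-_{\rho_\alpha^\Gamma} \big\|_{S(k)} \leq \frac{\lambda_1}{\lambda_1 - \lambda_2}$.

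To finish, I would identify the negative part: the unique negative eigenspace is $\spn\{\ket{e}\}$, so $P^-_{\rho_\alpha^\Gamma} = \ketbra{e}{e} = E$, and Example~\ref{exam:MatrixNormChoi} gives $\big\|E\big\|_{S(k)} = \frac{k}{n}$. A one-line computation gives $\lambda_1 - \lambda_2 = \frac{\alpha}{n-\alpha}$, hence $\frac{\lambda_1}{\lambda_1 - \lambda_2} = \frac{1}{\alpha n}$, so the condition of Corollary~\ref{cor:kposTwoEvals} becomes $\frac{k}{n} \leq \frac{1}{\alpha n}$, i.e., $\alpha \leq \frac{1}{k}$, exactly as desired.

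There is essentially no serious obstacle: the only points needing a little care are checking that $n(n-\alpha) > 0$ (so that the signs of $\lambda_1, \lambda_2$ are read off correctly), correctly identifying $P^-_{\rho_\alpha^\Gamma}$ with $E$ in the NPPT regime $\alpha > 1/n$, and using $k \leq n$ to glue the two cases together at $\alpha = 1/n$.
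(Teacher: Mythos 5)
Your proposal is correct and follows essentially the same route as the paper: apply Corollary~\ref{cor:kposTwoEvals} to the two-eigenvalue operator $I - \alpha n E$ and plug in $\big\|E\big\|_{S(k)} = k/n$ from Example~\ref{exam:MatrixNormChoi}. The only difference is that you spell out the sign analysis and the $\alpha \leq 1/n$ case explicitly, which the paper leaves implicit.
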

\begin{proof}
    Simply note that $(n^2 - \alpha n)\rho_\alpha^\Gamma = I - \alpha nE$ has only two distinct eigenvalues: $1$ and $1 - \alpha n$. Corollary~\ref{cor:kposTwoEvals} then implies that $\rho_\alpha^\Gamma$ is $k$-block positive if and only if $\big\|E\big\|_{S(k)} \leq \frac{1}{\alpha n}$. We saw in Example~\ref{exam:MatrixNormChoi} that $\big\|E\big\|_{S(k)} = \frac{k}{n}$, so the result follows.
\end{proof}

\begin{remark}
{\rm The special case $k = n$ of the above proposition is very
well-known and states that $\rho_\alpha$ is PPT if and only if
$\alpha \leq \frac{1}{n}$. Moreover, Proposition~\ref{prop:werner01}
shows that Werner states can not be bound entangled for $\alpha >
\frac{1}{2}$, which is also well-known. It has been conjectured
\cite{DSSTT00,DCL00} that Werner states are bound entangled for
all $\alpha \leq \frac{1}{2}$; this is exactly the set of values
for which $\rho_\alpha^\Gamma$ is $2$-positive.

Although we now have determined $k$-block positivity of $\rho_\alpha^\Gamma$, determining $k$-block positivity of $(\rho_\alpha^\Gamma)^{\otimes r}$ for $r > 1$ is not so simple in general because the projection onto the negative eigenspaces is no longer rank-$1$, so we cannot exactly compute its $k$th norm. Additionally, $(\rho_\alpha^\Gamma)^{\otimes r}$ has more than two distinct eigenvalues in general so we can no longer use Corollary~\ref{cor:kposTwoEvals}. To simplify the problem somewhat, consider the $\alpha = \frac{2}{n}$ case. Then the operator $X := (n^2 - 2)\rho_{2/n} = I - 2E^\Gamma$ has eigenvalues $1$ and $-1$, so $(X^\Gamma)^{\otimes r}$ has only two distinct eigenvalues ($1$ and $-1$) regardless of $r$. Corollary~\ref{cor:kposTwoEvals} then says that $\rho_{2/n}$ is bound entangled if and only if $\big\| P_r^{-} \big\|_{S(2)} \leq \frac{1}{2}$ for all $r \geq 1$, where $P_r^{-}$ is the projection onto the $-1$ eigenspace of $(\rho_{2/n}^\Gamma)^{\otimes r}$. This mirrors the approach attempted in \cite{PPHH07} to find a bound entangled NPPT Werner state, though that paper considers the $n = 4$ case exclusively. Note in particular that our tests of $k$-positivity derived in the previous section are strong enough to determine bound entanglement in some cases, assuming we can compute or find strong bounds on these norms in this situation.

We will finish this section by showing that, in the limit as $r$ tends to infinity, it is not possible to do any better than $\big\| P_r^{-} \big\|_{S(2)} \leq \frac{1}{2}$. More precisely, it is the case that
\[
    \lim_{r \to \infty} \big\| P_r^{-} \big\|_{S(2)} \geq \frac{1}{2}.
\]

To prove this claim, observe that ${\rm rank}(P_1^{-}) = 1$ and $P_{r}^{-} = P_{1}^{-} \otimes P_{r-1}^{+} + P_{1}^{+} \otimes P_{r-1}^{-}$ for all $r \geq 2$, where $P_{r}^{+}$ is the projection onto the $+1$ eigenspace of $(\rho_{2/n}^\Gamma)^{\otimes r}$. It follows that ${\rm rank}(P_r^{-}) = {\rm rank}(P_{r-1}^{+}) + (n^2 - 1){\rm rank}(P_{r-1}^{-})$ for all $r \geq 2$. Standard techniques for solving recurrence relations then show that ${\rm rank}(P_r^-) = \frac{1}{2}(n^{2r} - (n^2 - 2)^r)$ for all $r \geq 1$. Plugging this into the lower bound of Inequality~\eqref{eq:projIneq2} reveals that
\[
    \big\|P_r^-\big\|_{S(2)} \geq \frac{n^{2r} + \frac{1}{2}(n^r-2)(n^{2r} - (n^2 - 2)^r)}{n^{2r}(n^r-1)} = \frac{n^r-2}{2(n^r-1)} - \frac{(n^r-2)(n^2 - 2)^r - 2n^{2r}}{2n^{2r}(n^r-1)}.
\]
It is not difficult to verify that the lower bound on the right is always, for $n \geq 4$, strictly less than $\frac{1}{2}$. Furthermore, as $r \rightarrow \infty$, the rightmost fraction tends to zero and the left fraction tends to $\frac{1}{2}$. This shows that, asymptotically, $\frac{1}{2}$ is the smallest that we could ever hope $\big\|P_r^-\big\|_{S(2)}$ to be. Thus we have proved the following.  }
\end{remark}

\begin{cor}
The Werner state $\rho_{2/n}$ is bound entangled if and only if
\[
    \lim_{r \to \infty} \big\| P_r^{-} \big\|_{S(2)} = \frac{1}{2}.
\]
\end{cor}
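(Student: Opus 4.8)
The forward implication requires essentially no new argument. If $\rho_{2/n}$ is bound entangled then, as already recorded in the preceding remark (via \cite{DSSTT00}, Corollary~\ref{cor:kposTwoEvals} and Example~\ref{exam:MatrixNormChoi}), $\big\|P_r^-\big\|_{S(2)} \leq \tfrac{1}{2}$ for every $r\geq 1$, so $\limsup_r \big\|P_r^-\big\|_{S(2)} \leq \tfrac{1}{2}$; combined with the explicit lower bound derived above, which converges to $\tfrac{1}{2}$, this forces the limit to exist and equal $\tfrac{1}{2}$. So the real content is the converse, where one must exclude the scenario that $\big\|P_r^-\big\|_{S(2)} > \tfrac{1}{2}$ for some $r$ while $\lim_r \big\|P_r^-\big\|_{S(2)} = \tfrac{1}{2}$. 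My plan is to do this by showing that $r \mapsto \big\|P_r^-\big\|_{S(2)}$ is non-decreasing; then a single term exceeding $\tfrac{1}{2}$ would push the whole limit above $\tfrac{1}{2}$, a contradiction.

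For the monotonicity I would work from the recurrence $P_{r+1}^- = P_1^- \otimes P_r^+ + P_1^+ \otimes P_r^-$, in which the first tensor leg is the newly added copy of $\cl{L}(\cl{H}_n)\otimes\cl{L}(\cl{H}_n)$ and, since $\rho_{2/n}^\Gamma$ is proportional to $I - 2E$ with $E = \ketbra{e}{e}$, we have $P_1^- = \ketbra{e}{e}$ and $P_1^+ = I - \ketbra{e}{e}$. The idea is to feed an optimal Schmidt-rank-$\leq 2$ vector $\ket{v}$ for $P_r^-$ into $P_{r+1}^-$ by tensoring it with a \emph{product} vector $\ket{w}$ on the new copy chosen orthogonal to $\ket{e}$ (e.g.\ $\ket{w} = \ket{e_1}\otimes\ket{e_2}$, which works whenever $n\geq 2$). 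Because $\ket{w}$ is a product vector, $\ket{w}\otimes\ket{v}$ still has Schmidt rank at most $2$ with respect to the ``all-$A$ versus all-$B$'' bipartition defining $\big\|\cdot\big\|_{S(2)}$ (tensoring with a product vector does not raise the Schmidt rank), and since $P_1^-\ket{w} = 0$, $P_1^+\ket{w} = \ket{w}$ one gets $\big(\bra{w}\otimes\bra{v}\big)P_{r+1}^-\big(\ket{w}\otimes\ket{v}\big) = \bra{v}P_r^-\ket{v}$. Taking the supremum over $\ket{v}$ with $SR(\ket{v})\leq 2$ and invoking Proposition~\ref{prop:MatMultDiffVectors} (both projections are positive semidefinite) yields $\big\|P_{r+1}^-\big\|_{S(2)} \geq \big\|P_r^-\big\|_{S(2)}$.

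Once the monotonicity is in hand the rest is bookkeeping: the sequence $\big\{\big\|P_r^-\big\|_{S(2)}\big\}_r$ is non-decreasing and bounded above by $\big\|P_r^-\big\| = 1$, so $L := \lim_r \big\|P_r^-\big\|_{S(2)}$ exists, and the lower bound from the preceding remark forces $L \geq \tfrac{1}{2}$; monotonicity then makes the statement ``$\big\|P_r^-\big\|_{S(2)} \leq \tfrac{1}{2}$ for all $r$'' equivalent to ``$L \leq \tfrac{1}{2}$'', hence, using $L \geq \tfrac{1}{2}$, to ``$L = \tfrac{1}{2}$'', and the former is precisely the bound-entanglement criterion recorded via Corollary~\ref{cor:kposTwoEvals} and \cite{DSSTT00}. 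The one genuinely new point is the monotonicity step, and the only thing requiring care there is the interplay between the copy-by-copy tensor decomposition, in which the recurrence for $P_r^-$ is transparent, and the all-$A$/all-$B$ grouping against which Schmidt rank is measured — which is exactly what forces the use of a product vector $\ket{w}$ on the new copy rather than an arbitrary one.
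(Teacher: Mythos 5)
Your argument is correct, and it is actually more complete than the derivation the paper itself gives. The paper obtains the corollary from two ingredients: the equivalence ``$\rho_{2/n}$ is bound entangled iff $\big\|P_r^-\big\|_{S(2)} \leq \tfrac{1}{2}$ for all $r$'' (via \cite{DSSTT00} and Corollary~\ref{cor:kposTwoEvals}), and the explicit lower bound from Inequality~\eqref{eq:projIneq2} showing $\liminf_r \big\|P_r^-\big\|_{S(2)} \geq \tfrac{1}{2}$. Those two facts give the ``only if'' direction, but read literally they do not rule out the scenario you identify for the ``if'' direction, namely $\big\|P_{r_0}^-\big\|_{S(2)} > \tfrac{1}{2}$ for some $r_0$ while the limit still equals $\tfrac{1}{2}$; the paper passes over this point. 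Your monotonicity lemma closes that gap, and the details check out: $\braket{e}{w}=0$ for $\ket{w}=\ket{e_1}\otimes\ket{e_2}$, so $P_1^-\ket{w}=0$ and $P_1^+\ket{w}=\ket{w}$; tensoring with a product vector on the new copy preserves Schmidt rank with respect to the regrouped $A_0A_1\cdots A_r\,|\,B_0B_1\cdots B_r$ cut; and Proposition~\ref{prop:MatMultDiffVectors} applies since the $P_r^-$ are orthogonal projections, hence positive. (Equivalently, one could phrase the same fact as ``if $r_0$ copies witness non-$2$-block-positivity then so do $r \geq r_0$ copies, by absorbing the extra copies into a product state with $P_1^+\ket{w}=\ket{w}$''; this is your argument in disguise.) So the comparison is: the paper's route is shorter but leaves the converse implicit, while yours supplies the missing monotonicity of $r \mapsto \big\|P_r^-\big\|_{S(2)}$, which both makes the limit exist automatically and converts ``$L = \tfrac{1}{2}$'' into the uniform bound ``$\big\|P_r^-\big\|_{S(2)} \leq \tfrac{1}{2}$ for all $r$'' that the bound-entanglement criterion actually requires.
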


\section{Outlook}\label{sec:outlook}

We have seen that the family of  norms studied here play an important role in quantum information theory and have actually been used implicitly several times over the past decade. They are powerful tools for determining $k$-positivity of Hermitian operators, and hence entanglement witnesses, especially for the partial transpose of Werner states and other operators with only two distinct eigenvalues. Further exploration of the relationship between these norms, $k$-positivity tests, and Werner states in search of NPPT bound entangled states is warranted.

Many of the applications of these norms involve only the value of the norm on orthogonal projections. While we derived several ways to bound these norms, we do not know if our best lower bound involving $n$, $k$, and the rank of the projection is tight. Thus, a tight lower bound would be of significant interest, as would a characterization of the projections that attain the lower bound. Further analysis of the associated computational issues from a semidefinite programming perspective is included in \cite{JK09b}.

\vspace{0.1in}

\noindent{\bf Acknowledgements.} We are grateful to the Fields Institute for hosting the Thematic Program on Mathematics in Quantum Information, which led to this project. We thank Fernando Brandao, Vern Paulsen, and Mary Beth Ruskai for helpful comments and conversations. Thanks are also extended to Stanislaw Szarek, Elisabeth Werner, and Karol {\.Z}yczkowski for drawing our attention to a minor error in an earlier version of the paper. N.J. was supported by an NSERC Canada Graduate Scholarship and the University of Guelph Brock Scholarship. D.W.K. was supported by NSERC Discovery Grant 400160, NSERC Discovery Accelerator Supplement 400233, and Ontario Early Researcher Award 048142.


\end{document}